\renewcommand*\env@matrix[1][\arraystretch]{%
  \edef\arraystretch{#1}%
  \hskip -\arraycolsep
  \let\@ifnextchar\new@ifnextchar
  \array{*\c@MaxMatrixCols c}}
\newcommand{\mybmatrix}[4] {\setlength{\arraycolsep}{#2}\left[\hspace{#1}\begin{matrix}[#3]#4\end{matrix}\hspace{#1}\right]\setlength{\arraycolsep}{1mm}}
\newcommand*{\QEDB}{\hfill\ensuremath{\square}}%
\declaretheoremstyle[
spaceabove=8pt, spacebelow=8pt,
headfont=\normalfont\bfseries,
notefont=\normalfont, notebraces={(}{)},
headpunct={:},
bodyfont=\normalfont,
postheadspace=0.5em,
]{mynote}
\declaretheorem[style=mynote]{Remark}
\declaretheorem[style=mynote,name=Theorem]{Thm}
\declaretheorem[style=mynote,name=Lemma]{Lem}
\declaretheorem[style=mynote,name=Proposition]{Prop}
\declaretheorem[style=mynote,name=Algorithm]{Algno}
\newcommand{\paperTitle}{Adaptive recurrence quantum entanglement distillation \\for two-Kraus-operator channels}
\begin{document}

{\color{white} 
\fontsize{0pt}{0pt}\selectfont
\begin{acronym}
\acro{SVD}{singular value decomposition}\vspace{-16mm}
\acro{LOCC}{local operations and classical communication}\vspace{-16mm}
\acro{QED}{quantum entanglement distillation}\vspace{-16mm}
\acro{QEC}{quantum error correction}\vspace{-16mm}
\acro{w.r.t.}{with respect to}\vspace{-16mm}
\acro{FP}{fidelity-prioritized}\vspace{-16mm}
\acro{PP}{probability-prioritized}\vspace{-16mm}
\acro{QPA}{quantum privacy amplification}\vspace{-16mm}
\acro{TKO}{two-Kraus-operator}\vspace{-16mm}
\acro{RSSP}{remote shared-state preparation}\vspace{-16mm}
\end{acronym}}


\setcounter{page}{1}
\title{\paperTitle}

\author{Liangzhong~Ruan, Wenhan~Dai, and Moe~Z.~Win}
\affiliation{Laboratory for Information and Decision Systems, Massachusetts Institute of Technology\\
77 Massachusetts Avenue, Room 32-D608, Cambridge, MA 02139}


\date{\today}

\begin{abstract}
Quantum entanglement serves as a valuable resource for many important quantum operations.
A pair of entangled qubits can be shared between two agents by first preparing a maximally entangled qubit pair at one agent, and then sending one of the qubits to the other agent through a quantum channel. 
In this process, the deterioration of entanglement is inevitable since the noise inherent in the channel contaminates the qubit.
To address this challenge, various \ac{QED} algorithms have been developed.
Among them, recurrence algorithms have advantages in terms of implementability  and robustness.
However, the efficiency of recurrence \ac{QED} algorithms has not been investigated thoroughly in the literature.
This paper put forth two recurrence  \ac{QED} algorithms that adapt to the quantum channel to tackle the efficiency issue.
The proposed algorithms have guaranteed convergence for quantum channels with two Kraus operators, which include phase-damping and amplitude-damping channels.
Analytical results show that the convergence speed of these algorithms is improved from linear to quadratic and one of the algorithms achieves the optimal speed.
Numerical results confirm that the proposed algorithms significantly improve the efficiency of \ac{QED}.
\end{abstract}

\pacs{03.67.Ac, 03.67.Hk}
\keywords{}

\maketitle

\acresetall             

\section{Introduction}

Quantum entanglement shared by remote agents serves as a valuable resource for many important quantum communication operations \cite{Llo:03,HorHorHorHor:09,UlaFedPusKurRalLvo:J15}, such as secret key distribution \cite{Eke:91,KoaPre:03,GotLoLutPre:04}, dense coding \cite{BenSte:92,WanDenLiLiuLon:05,BarWeiKwi:08}, and teleportation \cite{BenBraCreJozPerWoo:93,NieKniLaf:98,GotChu:99}.
With the assistance of entanglement, the capacity of quantum channels can be increased, particularly when the channels are very noisy \cite{BenShoSmoTha:99,BenShoSmoTha:02,Sho:b04,Hol:12,HolShi:15}.
Entanglement also enables quantum relay, and therefore is a keystone of long-distance quantum communication \cite{DurBriCirZol:99,SanSimRieGis:11}.
To establish entanglement between two remote agents, one agent can locally generate a maximally entangled qubit pair and send one of the qubits to the other agent through a quantum channel.
However, the noise inherent in the channel will contaminate the qubit during the transmission, thereby deteriorating the entanglement.
To address this problem, \ac{QED} algorithms \cite{BenBraPopSchSmoWoo:96,DeuEkeJozMacPopSan:96,BriDurCirZol:98,OpaKur:99,MunOrs:09,DehVanDeMVer:03,VolVer:05,HosDehDeM:06,BenDivSmoWoo:96,Mat:03,AmbGot:06,WatMatUye:06} have been proposed to generate qubit pairs in the targeted maximally entangled state from many shared contaminated ones using \ac{LOCC}.

In the pioneering work \cite{BenBraPopSchSmoWoo:96}, two influential \ac{QED} algorithms were proposed and are now known as the recurrence algorithm and the asymptotic algorithm.
Recurrence algorithms \cite{DeuEkeJozMacPopSan:96,BriDurCirZol:98,OpaKur:99,MunOrs:09} operate separately on every two qubit pairs, improving the quality of entanglement in one pair at the expense of the other pair, which is then discarded.
The algorithms keep repeating this operation to progressively improve the quality of entanglement in the kept qubit pairs.
Asymptotic algorithms \cite{DehVanDeMVer:03,VolVer:05,HosDehDeM:06} operate on a large number of qubit pairs, detecting ones that are not in the targeted state by measuring a subset of the qubit pairs, and then transforming those in the undesired state to the targeted state.
Later, it was recognized that there is a duality between \ac{QED} and \ac{QEC} \cite{BenDivSmoWoo:96} when one-way classical communication is involved.
The connection between \ac{QED} and \ac{QEC}  was further explored in scenarios involving two-way classical communication, enabling code-based \ac{QED} algorithms \cite{Mat:03,AmbGot:06,WatMatUye:06}.
These algorithms operate on a few qubit pairs, look for the error syndrome using measurements specified by the error correction code, and then correct the errors to restore entanglement.

Asymptotic algorithms have revealed important theoretical insights, but these algorithms require agents that have the capability of processing a large number of qubits.
Code-based algorithms require agents to have the capability of processing only a few qubits, but the number of errors that can be corrected is limited by the Hamming distance of the codewords.
Designing \ac{QEC} codes with large Hamming distance is challenging since the creation of information redundancy, the main mechanism adopted in classical error correction codes, is not possible in quantum codes due to the no-cloning theorem \cite{Sho:95,Got:96,Kni:05,Got:09}.
Hence, these algorithms do not apply to scenarios with strong noise in the channel.
Recurrence algorithms require agents to have the capability of processing only a few qubits and can generate maximally entangled qubit pairs even in strong noise scenarios.
This is because the recurrence algorithms can mitigate stronger noise by performing more rounds of distillations.
In fact, the recurrence algorithm proposed in  \cite{BenBraPopSchSmoWoo:96} can distill contaminated qubit pairs into
maximally entangled qubit pairs as long as the initial fidelity of the contaminated qubit pairs \ac{w.r.t.} the targeted state is greater than $0.5$.\footnote{In \cite{HorHor:96,HorHorHor:97}, the authors proved that a state of qubit pairs is distillable if and only if its fidelity \ac{w.r.t.}  \emph{some} maximally entangled state is greater than $0.5$. 
Assuming that the agents have knowledge of the channel to properly select the targeted maximally entangled state, 
then the sufficient distillability condition provided in  \cite{BenBraPopSchSmoWoo:96} is equivalent to that proposed in \cite{HorHor:96,HorHorHor:97}.}

Building large-scale quantum circuits operating on many qubits is challenging \cite{Unr:95,Div:00,Suo:12}; 
even the error rates of two-qubit gates are significantly higher than those of one-qubit gates \cite{LadJelLafNakMonBri:10}.
In this perspective, recurrence algorithms are favorable for implementation as they require operations only on a few (typically one or two) qubits and are robust to strong noise in the quantum channel.
On the other hand, since at least half of the entangled qubit pairs are discarded in each round of distillation, the efficiency of the recurrence algorithms decreases dramatically with the number of rounds.\footnote{The efficiency of \ac{QED} algorithms is measured in terms of \emph{yield}, which is defined as the ratio between the number of maximally entangled output qubit pairs and the number of contaminated input qubit pairs.}
To address this challenge, the \ac{QPA} algorithm was proposed in \cite{DeuEkeJozMacPopSan:96}, and
was shown numerically to require fewer rounds of distillation than the algorithm in \cite{BenBraPopSchSmoWoo:96} for contaminated qubit pairs with a specific set of initial states.
However, the performance of \ac{QPA} algorithm was not characterized analytically.
In fact, another set of initial states was found in \cite{OpaKur:99} for which the \ac{QPA} algorithm was less efficient than the algorithm in \cite{BenBraPopSchSmoWoo:96}.
In \cite{OpaKur:99}, the design of distillation operations was formulated into an optimization problem,
which was inherently non-convex, and consequently, the optimal solution was not found.

We envision that a key enabler to designing efficient recurrence \ac{QED} algorithms is to make them adaptive to quantum channel.
Intuitively, compared to general algorithms,
\ac{QED} algorithms that adapt to channel-specific noise will better mitigate such noise and hence distill more efficiently.
In fact, it has been observed that knowing the channel benefits the performance of quantum error recovery \cite{FleShoWin:J07}, and channel-adaptive \ac{QEC} schemes that outperform prior ones \cite{FleShoWin:J08,FleShoWin:J08a} have been designed.

In this paper, we focus on \ac{TKO} channels, a class that covers several typical quantum channels, e.g., phase-damping and amplitude-damping channels.
The phase-damping channels describe the decoherence process of a photon traveling through a waveguide,
and the amplitude-damping channels model the decay of an excited atom due to spontaneous emission \cite[Sec. 3.4]{Pre:B}, \cite[Sec.8.3]{NieChu:B00}.
To achieve efficient distillation, we develop two adaptive recurrence \ac{QED} algorithms which adapt to the channel by employing a \ac{RSSP} method.\footnote{This method is akin to remote state preparation methods \cite{BenDivShoSmoTerWoo:01,BenHayLeuShoWin:05}, in which two remote agents employ \ac{LOCC} to prepare a quantum state at one of the agents. The proposed \ac{RSSP} method uses \ac{LOCC} to prepare a state shared by both agents.}
 The contributions of this work are
\begin{itemize}
\item characterization of the structure of \ac{TKO} channels;
\item characterization of the optimal fidelity that can be achieved by performing \ac{LOCC} on two qubits pairs affected by \ac{TKO} channels;
\item design of adaptive recurrence \ac{QED} algorithms which improve the convergence speed of fidelity from linear to quadratic, and one of them achieves the optimal speed.
\end{itemize}

\noindent{\bf Notations:} $a$, $\V{a}$, and $\M{A}$ represent scalar, vector, and matrices, respectively.
$\mathrm{pha}\{\cdot\}$ denotes the phase of a complex number.
$(\cdot)^\dag$, $\mathrm{rank}\{\cdot\}$, $\det\{\cdot\}$  and $\mathrm{tr}\{\cdot\}$, denote the Hermitian transpose, rank, determinant, and trace of a matrix, respectively. $\mathrm{tr}_{i,j}\{\cdot\}$ denotes the partial trace \ac{w.r.t.} to the $i$-th and $j$-th qubits in the system. $\mathrm{span}(\cdot)$ denotes the linear space spanned by a set of vectors. $\mathbb{I}_n$ denotes $n\times n$ identity matrix, and $\imath$ is the unit imaginary number.

\section{System Model}
Consider two remote agents, Alice and Bob, connected by a quantum channel and a two-way classical channel. When Alice transmits a qubit with density matrix $\V{\rho}_0$, the density matrix of the qubit received by Bob is given by
\begin{align}
\V{\rho}=\sum_{k=1}^{K}\M{C}_k\V{\rho}_0\M{C}^\dag_k,\label{eqn:channel}
\end{align}
where the Kraus operators $\{\M{C}_k\}$ representing the noisy quantum channel  satisfy
\begin{align}
\sum_{k=1}^{K}\M{C}^\dag_k\M{C}_k=\mathbb{I}_{2}.\label{eqn:channel_identity}
\end{align}
Since a qubit is a 2-dimensional system, the number of the Kraus operators $K\le 2^2=4$.\footnote{Quantum operators on $n$-dimensional systems are $n\times n$ matrices, and hence lies in an $n^2$ dimensional space. Thus, from \cite[Sec 3.3]{Pre:B}, if a channel for such systems is represented with more than $n^2$ operators, there always exists an equivalent representation with no more than $n^2$ non-zero operators.} 
When $K=1$, the channel is noiseless, and hence \ac{QED} is not needed.
For the class of \ac{TKO} channels, $K=2$.

Suppose Alice and Bob wish to obtain maximally entangled qubits pairs with density matrix $\V{\rho}_0=|\Phi^+\rangle\langle\Phi^+|$, where $|\Phi^+\rangle = \frac{1}{\sqrt{2}}\left(|00\rangle + |11\rangle\right)$.
To achieve this task, Alice locally prepares qubits pairs, each with density matrix $\V{\rho}_0$,
then sends the second qubit in each pair through the noisy channel. Then the density matrix of the two remote qubits becomes
\begin{align}
\V{\rho}&=\sum^2_{k=1} (\mathbb{I}_{2}\otimes\M{C}_k) \, \V{\rho}_0\,  (\mathbb{I}_{2}\otimes\M{C}_k)^\dag.
\label{eqn:initialstate}
\end{align}
 Alice and Bob then adopt a recurrence \ac{QED} algorithm outlined in Fig.~\ref{fig_recurrence}.
In each round of distillation, the agents separately operate on every two qubit pairs kept in the previous round,
perform \ac{LOCC}, and attempt to improve the quality of entanglement in one of the qubit pairs  at the expense of the other pair, which is then discarded (agents may discard both pairs when this operation is unsuccessful). The objective of the algorithm is to generate qubit pairs with density matrix $\V{\rho}^*$ close to the targeted state, i.e.,
\begin{align*}
\langle \Phi^+| \,\V{\rho}\, |\Phi^+\rangle \approx 1\end{align*}
where $\langle \Phi^+| \,\V{\rho}\, |\Phi^+\rangle$ is the fidelity of a density matrix $\V{\rho}$ and the targeted state $|\Phi^+\rangle$.

\begin{figure}[t] \centering
\includegraphics[scale=0.45]{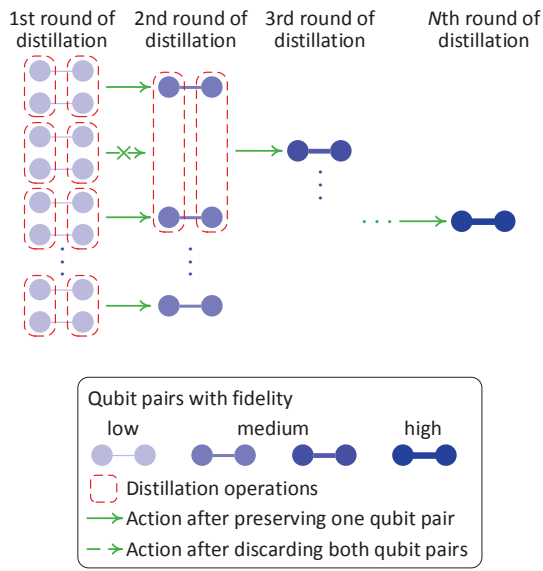}
\caption {The structure of recurrence \ac{QED} algorithms.}
\label{fig_recurrence}
\end{figure}

\section{Design of Adaptive Recurrence \ac{QED} Algorithm}
\subsection{Characterization of \ac{TKO} Channels}
Prior to designing adaptive \ac{QED} algorithms, it is crucial to understand the effect of noisy quantum channels on the entanglement between qubits.
This can be accomplished by determining the structure of the noisy quantum channels.
In particular, the structure of \ac{TKO} channels is provided by the following lemma.

\begin{Lem}[Structure of \ac{TKO} channels]\thlabel{lem:channel} For every single-qubit \ac{TKO} channel, there exist unitary matrices $\M{U},\V{V}\in\mathbb{C}^{2\times 2}$ and scalars $p\in[0,1]$, $\zeta\in[0,1]$, and $\eta\in\mathbb{C}$ with $|\eta|^2+\zeta^2=1$ such that the channel can be represented by
\begin{align}
{\M{C}}_1=\M{U}\begin{bmatrix}1&0\\0&\sqrt{1-p}\end{bmatrix}\V{V}^\dag,\qquad
{\M{C}}_2=\M{U}\begin{bmatrix}0&\eta\sqrt{p}\\0&\zeta\sqrt{p}\end{bmatrix}
\V{V}^\dag.
\label{eqn:channel_structure}
\end{align}
\end{Lem}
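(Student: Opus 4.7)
The plan is to exploit the well-known non-uniqueness of the Kraus representation: any family $\{\M{C}_k\}$ related to $\{\M{D}_k\}$ by $\M{C}_i = \sum_j U_{ij}\M{D}_j$ for a $2{\times}2$ unitary $U$ represents the same channel (with the familiar special case that multiplying a single Kraus operator by an overall phase is harmless). Starting from an arbitrary Kraus decomposition $\{\M{D}_1,\M{D}_2\}$ of the given \ac{TKO} channel, I will use this freedom to bring the operators into the canonical form of \eqref{eqn:channel_structure}, after which the singular value decomposition supplies the required $\M{U}$ and $\V{V}$.

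The crux of the argument is the first step: choose the second row of $U$ so that $\M{C}_2 = u_{21}\M{D}_1 + u_{22}\M{D}_2$ is rank-deficient, i.e., $\det\M{C}_2 = 0$. Since $\det(u\M{D}_1 + v\M{D}_2)$ is a homogeneous quadratic form in $(u,v)$, the fundamental theorem of algebra supplies a non-trivial complex null vector; normalizing it to unit length yields the second row of a unitary $U$, whose first row is then determined up to a phase by orthonormality. This step depends crucially on having exactly \emph{two} Kraus operators acting on a two-dimensional space, because both the number of mixing parameters and the scalar nature of the rank-deficiency condition rely on this size count.

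Once $\M{C}_2$ has rank at most one, an SVD calculation closes the argument. Let $\{0,p\}$ with $p\in[0,1]$ be the eigenvalues of the positive semidefinite matrix $\M{C}^\dag_2\M{C}_2$, where the upper bound $p\le 1$ follows from \eqref{eqn:channel_identity} forcing $\M{C}^\dag_2\M{C}_2\preceq\mathbb{I}_2$. Let $\V{V}=[\V{v}_1,\V{v}_2]$ be a unitary whose columns are orthonormal eigenvectors of $\M{C}^\dag_2\M{C}_2$ for eigenvalues $0$ and $p$, respectively. By \eqref{eqn:channel_identity}, $\M{C}^\dag_1\M{C}_1$ is diagonalized in the same basis with eigenvalues $\{1,1-p\}$, so the SVD of $\M{C}_1$ takes the form $\M{C}_1=\M{U}\,\mathrm{diag}(1,\sqrt{1-p})\,\V{V}^\dag$ for some unitary $\M{U}$, matching the first part of \eqref{eqn:channel_structure}. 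Because $\M{C}_2\V{v}_1 = 0$ and $\|\M{C}_2\V{v}_2\|=\sqrt{p}$, the matrix $\M{U}^\dag\M{C}_2\V{V}$ has vanishing first column and a second column equal to $\sqrt{p}\,(\eta,\zeta)^{\mathrm{T}}$ for some unit vector $(\eta,\zeta)^{\mathrm{T}}\in\mathbb{C}^2$.

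This already yields \eqref{eqn:channel_structure} up to the possibility that $\zeta$ is complex. The final step exploits the residual overall-phase freedom of $\M{C}_2$: rescaling $(u_{21},u_{22})$ by $e^{i\theta}$ keeps $U$ unitary and preserves both $\det\M{C}_2=0$ and \eqref{eqn:channel_identity}, while sending $(\eta,\zeta)\mapsto(e^{i\theta}\eta,e^{i\theta}\zeta)$; choosing $\theta=-\arg\zeta$ renders $\zeta\in[0,1]$ with $\eta\in\mathbb{C}$ and $|\eta|^2+\zeta^2=1$. The main obstacle is the rank-reduction step, since all the downstream statements are essentially bookkeeping with the SVD and phase normalizations; the degenerate cases $p=0$ (the channel is unitary, so $\M{C}_2=0$) and $p=1$ (where $\M{C}_1$ is already rank-deficient and the roles can be swapped) cause no difficulty.
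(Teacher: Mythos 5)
Your proposal is correct and follows essentially the same route as the paper: both use the unitary mixing freedom of Kraus representations together with the fundamental theorem of algebra (applied to a determinant that is quadratic in the mixing coefficients) to force $\M{C}_2$ to be rank-deficient, and then extract $\M{U}$ and $\V{V}$ from the singular-value structure of $\M{C}_1$ and $\M{C}_2$ via the completeness relation \eqref{eqn:channel_identity}. The only differences are cosmetic: the paper treats the rank-one case first and reduces the rank-two case to it via the substitution $\det\{-\M{C}_1+x\M{C}_2\}=0$, whereas you perform the rank reduction up front and phrase the second stage through the eigendecomposition of $\M{C}_2^\dag\M{C}_2$.
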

\begin{proof}
The proof is given in Appendix~\ref{pf_lem:channel}.
\end{proof}

Local unitary operations $\M{U}$ and $\V{V}$ do not affect the amount of entanglement \cite{VedPleRipKni:97}.
In particular, for any $\M{U}$ and $\V{V}$, Bob and Alice can respectively perform local unitary operations  $\M{U}^\dag$ and $\V{V}$ and obtain an equivalent channel with $\tilde{\M{U}}=\M{U}^\dag\M{U}=\mathbb{I}_2$ and
$\tilde{\V{V}}=\V{V}^\dag\V{V}=\mathbb{I}_2$.
Therefore, without loss of generality, $\M{U}$ and $\V{V}$ are assumed to be $\mathbb{I}_2$ in the following analysis. In this case, the channel can be represented by
\begin{align}
{\M{C}}_1=\begin{bmatrix}1&0\\0&\sqrt{1-p}\end{bmatrix},\qquad
{\M{C}}_2=\begin{bmatrix}0&\eta\sqrt{p}\\0&\zeta\sqrt{p}\end{bmatrix}.
\label{eqn:channel_structure2}
\end{align}

\begin{Remark}[The effect of \ac{TKO} channel on entanglement]\label{rem:chanel_para}
From \eqref{eqn:channel_structure2}, $\mathrm{rank}\{\M{C}_2\}=1$. Hence, $\M{C}_2$ can be written as in  \eqref{eqn:C2}.
In this case, for any qubit pairs with density matrix $\V{\sigma}_0$, conditional on that $\M{C}_2$ operates, 
the density matrix of the qubit pair after the second qubit goes through the channel becomes
\begin{align*}
\V{\sigma}&=\frac{(\mathbb{I}_{2}\otimes\M{C}_2) \, \V{\sigma}_0 \, (\mathbb{I}_{2}\otimes\M{C}_2)^\dag}
{\mathrm{tr}\{(\mathbb{I}_{2}\otimes\M{C}_2) \, \V{\sigma}_0 \, (\mathbb{I}_{2}\otimes\M{C}_2)^\dag\}}
=\V{\sigma}_{\mathrm{A}}\otimes\V{\sigma}_{\mathrm{B}}\end{align*}
where $\V{\sigma}_{\mathrm{B}}=|i\rangle\langle i|$, with $|i\rangle=\eta|0\rangle+\zeta|1\rangle$.
Therefore, when $\M{C}_2$ operates, $\V{\sigma}$ is a separable state, which implies that all entanglement between the two qubits is destroyed. 
The behavior of $\M{C}_2$ is mainly characterized by parameters 
$p$ and $\eta$.
\begin{itemize}
\item{\em The role of $p$:} $p\in[0,1]$ is the strength of $\M{C}_2$, which is proportional to the probability that $\M{C}_2$ operates on a qubit. The parameter $p$ can be thought of as the severity of noise in the channel since it characterizes the extent that the channel deteriorates the entanglement.
The larger the $p$, the more the entanglement is destroyed. 
In particular, all entanglement is preserved when $p=0$, and the opposite when $p=1$.
\item{\em The role of $\eta$:} $\arcsin(|\eta|)\in[0,\frac{\pi}{2}]$ is the angle between the image (i.e., $\mathrm{span}(\eta|0\rangle+\zeta|1\rangle)$) and the coimage (i.e., $\mathrm{span}(|1\rangle)$) of $\M{C}_2$. This characterizes the angle at which $\M{C}_2$ rotates the state of a qubit, and hence indicates the type of the channel. In particular, the channel is phase-damping when $\arcsin(|\eta|)=0$, i.e., $\eta=0$, and amplitude-damping when $\arcsin(|\eta|)=\frac{\pi}{2}$, i.e., $|\eta|=1$.
\end{itemize}
The channel properties described above are summarized in Fig.~\ref{fig_channels}.~\QEDB
\end{Remark}

\begin{figure}[t] \centering
\includegraphics[scale=0.42]{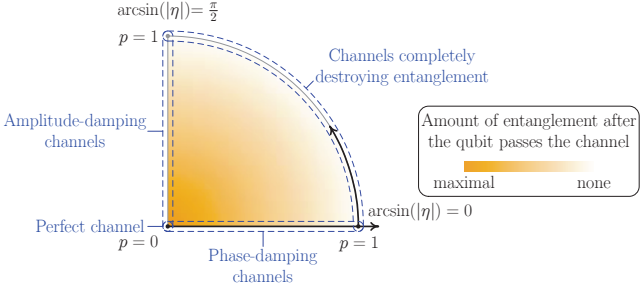}
\caption {Characterization of \ac{TKO} channels.}
\label{fig_channels}
\end{figure}

All entanglement is destroyed after the second qubit passes through a \ac{TKO} channel with $p=1$. 
Therefore, the interesting case for \ac{QED} is $p\in[0,1)$.
The following theorem characterizes the structure of the density matrix $\V{\rho}$ in this case.

\begin{Thm}[Structure of $\V{\rho}$]  \thlabel{thm:Str_rho} Consider the density matrix  of a qubit pair after the second qubit passes through a \ac{TKO} channel represented by \eqref{eqn:channel_structure2}. When $p<1$, there exist local unitary operators $\M{U}_{\mathrm A}$, $\M{U}_{\mathrm B}$ such that
\begin{align}
\check{\V{\rho}}=(\M{U}_{\mathrm A}\otimes\M{U}_{\mathrm B})\,\V{\rho}\,(\M{U}_{\mathrm A}\otimes\M{U}_{\mathrm B})^\dag = F|\mu\rangle\langle\mu| + (1-F)|\nu\rangle\langle\nu|
\label{eqn:int_dense_II}
\end{align}
where
\begin{align}
|\mu\rangle &= \alpha|00\rangle + \beta |11\rangle\label{eqn:mu}\\
|\nu\rangle &= \gamma|01\rangle + \delta e^{\imath \theta} |10\rangle\label{eqn:nu}
\end{align}
with $\theta$ a certain constant in $[0,2\pi)$, and
\begin{align}
F &= \frac{1}{2}+\frac{1}{2}\sqrt{(1-p)(1-|\eta|^2p)},\label{eqn:F_pzeta}\\
\alpha &= \sqrt{\frac{1}{2}+\frac{|\eta| p}{4F}},\qquad\hspace{10.5mm} \beta   = \sqrt{\frac{1}{2}-\frac{|\eta| p}{4F}},\label{eqn:alphabeta_pzeta}\\
\gamma  &= \sqrt{\frac{1}{2}-\frac{|\eta| p}{4(1-F)}},\qquad \delta = \sqrt{\frac{1}{2}+\frac{|\eta| p}{4(1-F)}}.\label{eqn:gammadelta_pzeta}
\end{align}
\end{Thm}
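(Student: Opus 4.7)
The plan is a direct eigendecomposition of $\V{\rho}$ combined with a Schmidt-basis change of variables, with the structural identity $\mathrm{tr}_B\V{\rho}=\tfrac12\mathbb{I}_2$ as the pivotal ingredient. First, I compute $|\psi_k\rangle := (\mathbb{I}_2\otimes\M{C}_k)|\Phi^+\rangle$ using~\eqref{eqn:channel_structure2}, obtaining $|\psi_1\rangle = \tfrac{1}{\sqrt 2}(|00\rangle+\sqrt{1-p}\,|11\rangle)$ and $|\psi_2\rangle = \tfrac{\sqrt p}{\sqrt 2}|1\rangle\otimes(\eta|0\rangle+\zeta|1\rangle)$, so $\V{\rho}=|\psi_1\rangle\langle\psi_1|+|\psi_2\rangle\langle\psi_2|$ has rank at most two, with $|01\rangle$ lying in its kernel. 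The non-zero eigenvalues of $\V{\rho}$ coincide with those of the $2\times 2$ Gram matrix $G_{jk}=\langle\psi_j|\psi_k\rangle$, which has $\mathrm{tr}\,G=1$ and $\det G = \tfrac{p}{4}(1+|\eta|^2(1-p))$; solving the resulting quadratic shows these are exactly $F$ and $1-F$ with $F$ as in~\eqref{eqn:F_pzeta}.

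Next, I use~\eqref{eqn:channel_identity} to record the key identity
\begin{align*}
\mathrm{tr}_B\V{\rho} \,=\, \mathrm{tr}_B\!\Bigl[\bigl(\mathbb{I}_2\otimes\textstyle\sum_k\M{C}_k^\dag\M{C}_k\bigr)\V{\rho}_0\Bigr] \,=\, \mathrm{tr}_B\V{\rho}_0 \,=\, \tfrac12\mathbb{I}_2,
\end{align*}
namely that Alice's marginal is maximally mixed regardless of the channel, a property preserved by any $\M{U}_A\otimes\M{U}_B$. I then find the (unnormalized) eigenvector $|\mu_0\rangle$ of $\V{\rho}$ for eigenvalue $F$ by solving $(\V{\rho}-F)|\mu_0\rangle=0$, take its Schmidt decomposition $|\mu_0\rangle/\|\mu_0\| = \alpha|a_0\rangle|b_0\rangle+\beta|a_1\rangle|b_1\rangle$ with $\alpha\ge\beta\ge 0$, and define $\M{U}_A,\M{U}_B$ as the local unitaries mapping $|a_j\rangle,|b_j\rangle\mapsto|j\rangle$. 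Under $\M{U}_A\otimes\M{U}_B$ the top eigenvector becomes $|\mu\rangle = \alpha|00\rangle+\beta|11\rangle$, and the Schmidt values $\alpha^2,\beta^2$ (the eigenvalues of the normalized marginal $\mathrm{tr}_B(|\mu_0\rangle\langle\mu_0|)/\langle\mu_0|\mu_0\rangle$) simplify to~\eqref{eqn:alphabeta_pzeta}.

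The main obstacle is to show that the same $\M{U}_A\otimes\M{U}_B$ brings the other eigenvector into the form $\gamma|01\rangle+\delta e^{\imath\theta}|10\rangle$. Taking the partial trace over Bob of $\check{\V{\rho}}=F|\mu\rangle\langle\mu|+(1-F)|\nu\rangle\langle\nu|$ and invoking the Alice-marginal identity yields $(1-F)\rho_A^\nu = \tfrac12\mathbb{I}_2-F\rho_A^\mu$. Since $\rho_A^\mu=\alpha^2|0\rangle\langle 0|+\beta^2|1\rangle\langle 1|$ is diagonal, so is $\rho_A^\nu$, and its eigenvalues reduce to $\gamma^2=\tfrac12-\tfrac{|\eta|p}{4(1-F)}$ and $\delta^2=\tfrac12+\tfrac{|\eta|p}{4(1-F)}$, exactly matching~\eqref{eqn:gammadelta_pzeta}. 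In the generic regime $|\eta|p>0$ these values are distinct, so Alice's Schmidt basis for $|\nu\rangle$ is uniquely $\{|0\rangle,|1\rangle\}$ (up to phases), and the Schmidt decomposition reads $|\nu\rangle = \gamma|0\rangle|b'_0\rangle+\delta|1\rangle|b'_1\rangle$ for some orthonormal $\{|b'_0\rangle,|b'_1\rangle\}$.

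Finally, I impose $\langle\mu|\nu\rangle=0$. Writing $|b'_0\rangle=\cos\xi|0\rangle+e^{\imath\chi}\sin\xi|1\rangle$, with $|b'_1\rangle$ the orthogonal partner carrying a further phase $e^{\imath\omega}$, the orthogonality condition reduces to $\cos\xi\bigl(\alpha\gamma+\beta\delta e^{\imath\omega}\bigr)=0$. The ``parallel'' branch $\alpha\gamma+\beta\delta e^{\imath\omega}=0$ would force $\alpha\gamma=\beta\delta$; substituting the values from~\eqref{eqn:alphabeta_pzeta} and~\eqref{eqn:gammadelta_pzeta} gives $\alpha^2\gamma^2-\beta^2\delta^2 = \tfrac{|\eta|p(1-2F)}{4F(1-F)}$, which is non-zero in the generic regime, ruling this branch out. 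Hence $\cos\xi=0$, so $|b'_0\rangle\propto|1\rangle$ and $|b'_1\rangle\propto|0\rangle$; absorbing a global phase yields $|\nu\rangle=\gamma|01\rangle+\delta e^{\imath\theta}|10\rangle$ for some $\theta\in[0,2\pi)$. The degenerate edges ($|\eta|p=0$, corresponding to a phase-damping channel where $\alpha=\beta$ and the Schmidt bases are non-unique) are handled by a direct verification using the residual freedom to choose $\M{U}_A\otimes\M{U}_B$ so that $|\nu\rangle$ lands in the same form (e.g., a Hadamard-type rotation on both sides).
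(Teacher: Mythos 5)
Your proposal is correct, and it reaches the theorem by a route that differs from the paper's in several places while sharing its two structural pillars: the identity $\mathrm{tr}_{2}\{\V{\rho}\}=\tfrac12\mathbb{I}_2$ and the orthogonality of the two eigenvectors. For the eigenvalues you diagonalize the Gram matrix of the Kraus images $(\mathbb{I}_2\otimes\M{C}_k)|\Phi^+\rangle$, whereas the paper instead constructs an explicit unitary remixing $\V{A}$ of the Kraus operators (with a specific $\kappa$) that orthogonalizes the two vectors and reads off the spectral decomposition directly; your version avoids having to guess $\kappa$. For the form of $|\nu\rangle$, your argument is essentially the paper's, but streamlined: once $\rho_{\mathrm A}^{\nu}$ is shown diagonal with distinct eigenvalues, Alice's Schmidt basis for $|\nu\rangle$ is forced to be $\{|0\rangle,|1\rangle\}$, and $\langle\mu|\nu\rangle=0$ then forces Bob's side onto $\{|1\rangle,|0\rangle\}$; the paper reaches the same conclusion through the system \eqref{eqn:munu_I3_1}--\eqref{eqn:munu_I3_4} and the chain $\gamma<\beta<\tfrac{1}{\sqrt2}<\alpha<\delta$. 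The one soft spot is where you obtain \eqref{eqn:alphabeta_pzeta}: you assert that the Schmidt coefficients of the top eigenvector ``simplify to'' the stated values without carrying out the computation, and everything downstream (the values of $\gamma,\delta$ via the marginal identity, the non-degeneracy $\gamma\neq\delta$, and the exclusion of the parallel branch $\alpha\gamma=\beta\delta$) leans on that number. The claim is true --- it is equivalent to Bob's marginal $\V{\rho}_2$ having eigenvalues $\tfrac12(1\pm|\eta|p)$, which is how the paper pins down $\alpha$ and $\delta$ at the end via $\det\{\check{\V{\rho}}_2\}=\det\{\V{\rho}_2\}$ --- but you should either do that eigenvector computation explicitly or borrow the paper's determinant argument to close it. Your dispatch of the degenerate case $|\eta|p=0$ by a Hadamard rotation matches the paper's explicit treatment of $\eta=0$ and $p=0$.
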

\begin{proof}
The proof is given in Appendix~\ref{pf_thm:Str_rho}.
\end{proof}

\begin{Remark}[The role of $\M{U}_{\mathrm{A}}$ and $\M{U}_\mathrm{B}$]
Equations \eqref{eqn:int_dense_II}--\eqref{eqn:nu} show that by applying the properly designed local unitary operators $\M{U}_{\mathrm{A}}$ and $\M{U}_\mathrm{B}$, the density matrix $\V{\rho}$ can be transformed into $\check{\V{\rho}}$ with
all its eigenvectors written in the computational basis, i.e., $\{|0\rangle,|1\rangle\}$ via the Schmidt decomposition.
This transformation enables the simplification for both analysis and algorithm design in the following sections.
In particular, $\M{U}_{\mathrm{A}}$ and $\M{U}_\mathrm{B}$ will be employed by Alice and Bob respectively on their individual qubits before the recurrent distillation operations begin, and hence will be referred to as the pre-distillation unitary operators.~\QEDB
\end{Remark}

\subsection{Characterization of the Optimal Fidelity}

This section proves the optimal fidelity that can be achieved by performing appropriate \ac{LOCC}.
Consider recurrence \ac{QED} algorithms outlined in Fig.~\ref{fig_recurrence}, which 
\begin{itemize}
\item perform \ac{LOCC} on two qubit pairs with density matrix $\check{\V{\rho}}$ given by \eqref{eqn:int_dense_II}, and 
\item keep at most one pair.
\end{itemize}
Since operations performed by agents are local, they can be expressed as
$\V{N}^{(k)}_{\mathrm{A}}\otimes\V{N}^{(k)}_{\mathrm{B}}$, $k\in\{1,2,\ldots,K\}$, satisfying
$\sum_{k=1}^K\big(\V{N}^{(k)}_{\mathrm{A}}\big)^\dag\V{N}^{(k)}_{\mathrm{A}}=
\sum_{k=1}^K\big(\V{N}^{(k)}_{\mathrm{B}}\big)^\dag\V{N}^{(k)}_{\mathrm{B}}=\mathbb{I}_{4
}$. Without loss of generality, assume that agents keep the first qubit pair conditioned on the event that one of the first $\tilde{K}$ operators acts on the four qubits. Then after the \ac{LOCC}, the density matrix of the first qubit pair is given by
\begin{align*}
\breve{\V{\rho}}=\frac{\mathrm{tr}_{2,4}\big\{\sum_{k=1}^{\tilde{K}}(\V{N}^{(k)}_{\mathrm{A}}\otimes\V{N}^{(k)}_{\mathrm{B}})(\V{P}\check{\V{\rho}}\otimes\check{\V{\rho}}\,\V{P}^\dag)(\V{N}^{(k)}_{\mathrm{A}}\otimes\V{N}^{(k)}_{\mathrm{B}})^\dag\big\}}
{\mathrm{tr}\big\{\sum_{k=1}^{\tilde{K}}(\V{N}^{(k)}_{\mathrm{A}}\otimes\V{N}^{(k)}_{\mathrm{B}})(\V{P}\check{\V{\rho}}\otimes\check{\V{\rho}}\,\V{P}^\dag)(\V{N}^{(k)}_{\mathrm{A}}\otimes\V{N}^{(k)}_{\mathrm{B}})^\dag\big\}}
\end{align*}
where 
\begin{align*}
\V{P}=\mathbb{I}_2\otimes\big(|00\rangle\langle00|+|10\rangle\langle01|+|01\rangle\langle10|+|11\rangle\langle11|\big)\otimes\mathbb{I}_2
\end{align*} is the permutation operator that switches the second and third qubits. With this operator, the joint density matrix $\V{\rho}_{\mathrm{J}}=\V{P}\check{\V{\rho}}\otimes\check{\V{\rho}}\V{P}^\dag$ corresponds to four qubits, where the first two belong to Alice and last two belong to Bob.

Denote $F^*$ as the optimal fidelity that can be achieved with initial density matrix $\check{\V{\rho}}$ via all possible \ac{LOCC}, i.e.,
\begin{align}
F^*=\max_{\{\V{N}^{(k)}_{\mathrm{A}},\V{N}^{(k)}_{\mathrm{B}}\}^{\tilde{K}}_{k=1}\in\Set{F}}\langle \Phi^+| \,\breve{\V{\rho}}\, |\Phi^+\rangle\label{eqn:defoptF}
\end{align}
where $\Set{F}$ denotes the set of all possible \ac{LOCC}.

The characterization of the optimal fidelity $F^*$  is challenging because it involves general \ac{TKO} channels and arbitrary \ac{LOCC}.
These two issues are tackled by the following lemmas.
\thref{lem:simplifybound} characterizes the relationship between the $F^*$ for general \ac{TKO} channels and that for the special case of phase-damping channels.
\thref{lem:operation} exploits the property of separable operators to determine
the set of attainable density matrices of the first qubit pair.

\begin{Lem}[Simplification to phase-damping]\thlabel{lem:simplifybound}
Express the optimal fidelity $F^*$ explicitly as a function of the density matrix parameters in  \eqref{eqn:F_pzeta}--\eqref{eqn:gammadelta_pzeta}, i.e., \begin{align*}
F^*=f(F, \alpha, \beta, \gamma, \delta, \theta).
\end{align*}
If the optimal fidelity for phase-damping channels is upper bounded by
\begin{align*}
f(F, \frac{1}{\sqrt{2}},  \frac{1}{\sqrt{2}}, \frac{1}{\sqrt{2}},  \frac{1}{\sqrt{2}}, 0)\le\frac{F^2}{F^2+(1-F)^2}, \quad\forall F\in(\frac{1}{2},1]
\end{align*}
then the optimal fidelity for generic \ac{TKO} channels satisfies 
\begin{align}
f(F, \alpha, \beta, \gamma, \delta, \theta)\le
\frac{F^2}{F^2+(1-F)^2(\frac{\gamma\delta}{\alpha\beta})^2}\label{eqn:F_lebound}
\end{align}
$\forall F$, $\alpha$, $\beta$, $\gamma$, $\delta$, and $\theta$ satisfying \eqref{eqn:F_pzeta}--\eqref{eqn:gammadelta_pzeta}.
\end{Lem}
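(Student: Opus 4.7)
The plan is to show that $\check{\V{\rho}}$ is locally stochastically equivalent to the phase-damping density matrix $\check{\V{\rho}}_{\mathrm{PD}}(F')=F'|\Phi^+\rangle\langle\Phi^+|+(1-F')|\Psi^+\rangle\langle\Psi^+|$, where $|\Psi^+\rangle=(|01\rangle+|10\rangle)/\sqrt{2}$ and $F'$ is an appropriate reparametrisation of $F$. Since two-copy LOCC optima are preserved under invertible local filters, the hypothesised bound on the phase-damping fidelity then transports directly to the general case after inverting the reparametrisation.

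Concretely, the first step is to introduce diagonal local filters $\M{F}_{\mathrm{A}}=\mathrm{diag}(a_0,a_1)$ and $\M{F}_{\mathrm{B}}=\mathrm{diag}(b_0,b_1)$ chosen so that $(\M{F}_{\mathrm{A}}\otimes\M{F}_{\mathrm{B}})|\mu\rangle\propto|\Phi^+\rangle$ and $(\M{F}_{\mathrm{A}}\otimes\M{F}_{\mathrm{B}})|\nu\rangle\propto|\Psi^+\rangle$. Writing $x=a_1/a_0$ and $y=b_1/b_0$, these two proportionality conditions reduce to $xy=\alpha/\beta$ and $x/y=(\gamma/\delta)\,e^{-\imath\theta}$, which admit a closed-form solution whenever $\alpha\beta\gamma\delta\neq 0$; the degenerate cases are recovered by continuity. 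Rescaling the filters so that they are contractions makes each one realisable as a POVM element by Alice and Bob respectively. A direct calculation of the conditional success probabilities on $|\mu\rangle$ and $|\nu\rangle$ then shows that the post-filter state is $\check{\V{\rho}}_{\mathrm{PD}}(F')$ with $F'=F/\bigl(F+(1-F)r\bigr)$, where $r=\gamma\delta/(\alpha\beta)$. Since the filters have nonzero diagonal entries, their rescaled inverses provide a probabilistic local reversal from $\check{\V{\rho}}_{\mathrm{PD}}(F')$ back to $\check{\V{\rho}}$. Applying the forward and inverse filters in parallel to each of the two input copies, any LOCC on two copies of one state can be simulated, conditional on successful filtering, by an LOCC on two copies of the other, so the conditional optima agree: $f(F,\alpha,\beta,\gamma,\delta,\theta)=f(F',\tfrac{1}{\sqrt{2}},\tfrac{1}{\sqrt{2}},\tfrac{1}{\sqrt{2}},\tfrac{1}{\sqrt{2}},0)$. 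A short algebraic check using \eqref{eqn:alphabeta_pzeta}--\eqref{eqn:gammadelta_pzeta} confirms that $r\le 1$ whenever $F\ge 1/2$, so $F'\in(1/2,1]$ whenever $F\in(1/2,1]$ and the hypothesised bound applies; substituting it in and simplifying with $1-F'=(1-F)r/\bigl(F+(1-F)r\bigr)$ yields $F'^2/(F'^2+(1-F')^2)=F^2/\bigl(F^2+(1-F)^2r^2\bigr)$, which is exactly \eqref{eqn:F_lebound}.

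The main obstacle is the simulation step: showing that invertible local filters preserve the optimal LOCC fidelity even when the LOCC acts jointly on two copies. The single-copy analogue is the familiar fact that LOCC optima are invariant under local stochastic equivalence, but the two-copy version requires running the filter (and its inverse) in parallel on both input copies and tracking the normalisation constants through an arbitrary LOCC Kraus decomposition, which is where most of the care is needed.
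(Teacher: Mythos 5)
Your proposal is correct and rests on the same mechanism as the paper's proof: a diagonal local filter (the paper's pre-distillation measurement operators $\V{M}_{\mathrm A}\otimes\V{M}_{\mathrm B}$ in \eqref{eqn:MA}--\eqref{eqn:MB}) that stochastically interconverts $\check{\V{\rho}}$ and the phase-damping state with the reparametrised fidelity $\tilde{F}=F/\bigl(F+(1-F)\tfrac{\gamma\delta}{\alpha\beta}\bigr)$, applied in parallel to both copies and composed with the optimal \ac{LOCC}. The only differences are presentational: the paper argues by contradiction using just the phase-damping-to-general direction of the filter, whereas you establish the two-way equivalence directly, of which only that same direction is actually needed.
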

\begin{proof}
The proof is given in Appendix~\ref{pf_lem:simplifybound}.
\end{proof}

\begin{Lem}[Density matrix after arbitrary seperable operation]\thlabel{lem:operation}
For phase-damping channels, after arbitrary seperable operator acts on two qubit pairs, the density matrix of the kept qubit pair
can be expressed as
\begin{align}
\breve{\V{\rho}}=\frac{\sum_{i=1}^{4}C_i\pmb{\psi}^{(i)}\pmb{\psi}^{(i)\dag}}
{\sum_{i=1}^{4}C_i\pmb{\psi}^{(i)\dag}\pmb{\psi}^{(i)}}\label{eqn:rho2}
\end{align}
where $C_1=F^2$, $C_2=C_3=F(1-F)$, $C_4=(1-F)^2$,
\begin{align}
\pmb{\psi}^{(i)}&=\begin{bmatrix}
w_{11} & x_{11} & y_{11} & z_{11}\\
w_{12} & x_{12} & y_{12} & z_{12}\\
w_{21} & x_{21} & y_{21} & z_{21}\\
w_{22} & x_{22} & y_{22} & z_{22}
\end{bmatrix}\V{v}^{(i)}\label{eqn:phi_3}
\end{align}
in which $s_{ij}$, $s\in\{w,x,y,z\}$, $i,j\in\{1,2\}$ are complex numbers, satisfying
\begin{align}
s_{11}s_{22}=s_{12}s_{21}\label{eqn:srelation}
\end{align}
$\V{v}^{(i)}$ is the $i$-th column of the unitary matrix
\begin{align}
\V{V}=\frac{1}{2}
\left[\begin{array}{rrrr}1&1&1&1\\1&-1&1&-1\\1&1&-1&-1\\1&-1&-1&1\end{array}\right]\label{eqn:U}
\end{align}
and $\sum_{i=1}^{4}C_i\pmb{\psi}^{(i)\dag}\pmb{\psi}^{(i)}>0$.
\end{Lem}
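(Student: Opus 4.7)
First, I would reduce to a canonical form using \thref{thm:Str_rho}. With $\eta=0$, one has $\check{\V{\rho}}=F|\Phi^+\rangle\langle\Phi^+|+(1-F)|\nu\rangle\langle\nu|$ where $|\nu\rangle=\tfrac{1}{\sqrt{2}}(|01\rangle+e^{\imath\theta}|10\rangle)$; the phase $\theta$ can be absorbed into a local unitary on one qubit, so without loss of generality $|\nu\rangle=|\Psi^+\rangle$. The permutation $\V{P}$ only reorders qubits so that Alice's two qubits come first and Bob's two qubits last, giving $\V{\rho}_J=\check{\V{\rho}}_{A_1B_1}\otimes\check{\V{\rho}}_{A_2B_2}$.

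Next, I would pass to the Hadamard-rotated basis $|\sigma_0\rangle:=|++\rangle$, $|\sigma_1\rangle:=|--\rangle$. Both $|\Phi^+\rangle$ and $|\Psi^+\rangle$ lie in $\mathrm{span}\{|\sigma_0\rangle,|\sigma_1\rangle\}$, so $\check{\V{\rho}}$ is supported there with matrix $D_1=\tfrac{1}{2}\bigl(\begin{smallmatrix}1 & 2F-1\\ 2F-1 & 1\end{smallmatrix}\bigr)$; consequently $\V{\rho}_J$ is supported on the four-dimensional subspace $\{|\sigma_i\sigma_j\rangle\}_{i,j\in\{0,1\}}$ with matrix $D=D_1\otimes D_1$. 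A direct computation shows that the four eigenvalues of $D$ are exactly $C_1,\ldots,C_4$, while its eigenvectors in the $|\sigma_i\sigma_j\rangle$ basis are precisely the four columns $\V{v}^{(k)}$ of the Hadamard matrix $\V{V}$ in \eqref{eqn:U}.

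The linchpin is the identity
\begin{align*}
(N_A\otimes N_B)(|\sigma_i\rangle_{A_1B_1}|\sigma_j\rangle_{A_2B_2})=(N_A|h_ih_j\rangle_{A_1A_2})\otimes(N_B|h_ih_j\rangle_{B_1B_2}),
\end{align*}
with $|h_0\rangle=|+\rangle$, $|h_1\rangle=|-\rangle$, proved by expanding $|\sigma_i\rangle=\tfrac{1}{2}\sum_{a,b}(-1)^{i(a+b)}|ab\rangle$ and factoring the Alice- and Bob-sums independently. Thus every support basis vector of $\V{\rho}_J$ becomes a product across the Alice--Bob cut after the separable operation. Applying $\mathrm{tr}_{A_2B_2}$ yields $\breve{\V{\rho}}\propto\V{S}D\V{S}^\dag$, where the $(i,j)$-th column of $\V{S}$ captures the partial-trace residue on $(A_1,B_1)$ arising from the tensor product of $N_A|h_ih_j\rangle$ and $N_B|h_ih_j\rangle$. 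Since this residue factorizes into an $A_1$-contribution from $N_A|h_ih_j\rangle$ and a $B_1$-contribution from $N_B|h_ih_j\rangle$, each column of $\V{S}$ is a product state in $\mathbb{C}^2\otimes\mathbb{C}^2$, yielding the rank-one condition \eqref{eqn:srelation}. Substituting $D=\sum_k C_k\V{v}^{(k)}(\V{v}^{(k)})^T$ then gives $\breve{\V{\rho}}\propto\sum_k C_k\V{\psi}^{(k)}(\V{\psi}^{(k)})^\dag$ with $\V{\psi}^{(k)}=\V{S}\V{v}^{(k)}$, matching \eqref{eqn:rho2}.

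The hardest part is rigorously establishing the product-state structure of $\V{S}$'s columns. Since $N_A|h_ih_j\rangle_{A_1A_2}$ is generically entangled across $A_1$--$A_2$, its $A_2$-partial trace is rank up to two, preventing a naive single-vector identification per column. The resolution is to exploit the nonuniqueness of Kraus representations: by splitting each Schmidt component of $N_A|h_ih_j\rangle$ (and likewise $N_B|h_ih_j\rangle$) into a separate Kraus branch, each refined branch sends $|\sigma_i\sigma_j\rangle$ to a state whose $(A_2,B_2)$-partial trace is a pure product on $(A_1,B_1)$; summing over refined branches reproduces $\breve{\V{\rho}}$ in the announced form.
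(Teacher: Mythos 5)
Your route is essentially the paper's. The paper likewise conjugates everything by Hadamards (writing $\V{N}=\V{N}(\V{H}^\dag\otimes\V{H}^\dag)(\V{H}\otimes\V{H})$ so that $\check{\V{\rho}}\mapsto F|\Phi^+\rangle\langle\Phi^+|+(1-F)|\Phi^-\rangle\langle\Phi^-|$), observes that the reordered joint state is supported on the four vectors $|ss\rangle=|s\rangle_{A_1A_2}\otimes|s\rangle_{B_1B_2}$, $s\in\{w,x,y,z\}$, with eigenvalues $C_i$ and eigenvectors the columns of $\V{V}$ in \eqref{eqn:U} --- your $D_1\otimes D_1$ computation is exactly this spectral decomposition --- and derives \eqref{eqn:srelation} from the fact that a separable operator maps the vector $|ss\rangle$, which is a product across the Alice--Bob cut, to a product vector in $\mathbb{C}^2\otimes\mathbb{C}^2$. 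Your factorization identity for $(\V{N}_{\mathrm A}\otimes\V{N}_{\mathrm B})|\sigma_i\sigma_j\rangle$ is that same observation in the unrotated basis. To your credit, you also isolate the one genuinely delicate step, which the paper's write-up glosses over by coherently summing $(\mathbb{I}_2\otimes\langle k|)$ over the discarded indices: the partial trace of $\pmb{\phi}^{(i)}\pmb{\phi}^{(i)\dag}$ is not rank one, so \eqref{eqn:rho2} cannot simply be read off.

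However, your repair of that step does not work as stated. Splitting $\V{N}_{\mathrm A}|h_ih_j\rangle$ along its Schmidt components does not define a Kraus refinement of the channel, because the Schmidt basis on $A_2$ depends on the input pair $(i,j)$; a refinement must use one fixed orthonormal basis of the discarded qubits, giving product branches $\V{M}^{(k)}_{\mathrm A}\otimes\V{M}^{(l)}_{\mathrm B}$ with $\V{M}^{(k)}_{\mathrm A}=(\mathbb{I}_2\otimes\langle k|)\V{N}_{\mathrm A}$, each of which maps every $|ss\rangle$ to a product vector and hence produces an output exactly of the form \eqref{eqn:rho2}. More importantly, your last sentence overclaims: summing the branches yields a convex mixture of states of the form \eqref{eqn:rho2} with a \emph{different} set of $s_{kj}$ for each branch, and such a mixture is not itself of the form \eqref{eqn:rho2}. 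The statement is therefore only recovered branchwise, i.e., for a single product operator from four qubits to two. That is all the downstream argument needs, since the fidelity used in \thref{thm:UBF} is bounded branch by branch and then averaged, but it must be said explicitly rather than obtained ``by summing over refined branches.''
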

\begin{proof}
The proof is given in Appendix~\ref{pf_lem:operation}.
\end{proof}

With the issues of general \ac{TKO} channels and arbitrary \ac{LOCC} addressed, the optimal fidelity $F^*$ can now be characterized.
\begin{Thm}[Optimal Fidelity] \thlabel{thm:UBF} Consider the density matrix $\check{\V{\rho}}$, given in \eqref{eqn:int_dense_II}, of a pair of entangled qubits shared by agents via a \ac{TKO} channel.  
Then the optimal fidelity of the kept qubit pair after performing \ac{LOCC} is given by
\begin{align}
F^*= \frac{F^2}{F^2+(1-F)^2\big(\frac{\gamma\delta}{\alpha\beta}\big)^2}
\label{eqn:Fidelity_Upper}
\end{align}
\end{Thm}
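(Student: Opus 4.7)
The plan is to prove (\ref{eqn:Fidelity_Upper}) by establishing matching upper and lower bounds: the upper bound by chaining \thref{lem:simplifybound} and \thref{lem:operation}, and the lower bound by exhibiting an explicit \ac{LOCC} that attains the claimed fidelity.

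For the upper bound I first invoke \thref{lem:simplifybound}, which reduces the task to the phase-damping subcase where $\alpha=\beta=\gamma=\delta=\tfrac{1}{\sqrt{2}}$ and $\theta=0$; in this subcase $|\mu\rangle$ and $|\nu\rangle$ collapse to the Bell states $|\Phi^+\rangle$ and $|\Psi^+\rangle$, and the target bound simplifies to $F^2/(F^2+(1-F)^2)$. I then substitute the structural description of $\breve{\V{\rho}}$ from \thref{lem:operation} into $\langle\Phi^+|\breve{\V{\rho}}|\Phi^+\rangle$, turning the fidelity into a ratio $\sum_i C_i |\langle\Phi^+|\pmb{\psi}^{(i)}\rangle|^2 / \sum_i C_i \|\pmb{\psi}^{(i)}\|^2$. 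Since $C_1/(C_1+C_4)=F^2/(F^2+(1-F)^2)$ is exactly the target, what remains is a constrained comparison between the $\pmb{\psi}^{(1)}$ and $\pmb{\psi}^{(4)}$ contributions, with the $\pmb{\psi}^{(2)}$, $\pmb{\psi}^{(3)}$ terms to be absorbed harmlessly. The key structural hook is the rank-one constraint (\ref{eqn:srelation}), which factorizes each column $(s_{11},s_{12},s_{21},s_{22})^T$ of the matrix in (\ref{eqn:phi_3}) as a product state $|a\rangle\otimes|b\rangle$; this reduces the 16 complex degrees of freedom to 8 pairs $(a_j,b_j)$ and turns each $\langle\Phi^+|\pmb{\psi}^{(i)}\rangle$ into a bilinear form amenable to a Cauchy--Schwarz-type estimate.

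For the lower bound I will produce an explicit \ac{LOCC} achieving (\ref{eqn:Fidelity_Upper}). The candidate is a bilateral-CNOT protocol in the spirit of \cite{BenBraPopSchSmoWoo:96}, preceded by the pre-distillation unitaries $\M{U}_{\mathrm A}\otimes\M{U}_{\mathrm B}$ supplied by \thref{thm:Str_rho}: Alice and Bob each apply CNOT with the first pair as control and the second as target, measure the target pair in the computational basis, and keep the control pair conditional on matching outcomes. Using (\ref{eqn:int_dense_II})--(\ref{eqn:gammadelta_pzeta}) the post-measurement state of the kept pair can be written in closed form, with the $|\mu\rangle\langle\mu|$ component contributing weight proportional to $(\alpha\beta)^2 F^2$ onto $|\Phi^+\rangle$ and the $|\nu\rangle\langle\nu|$ component contributing weight proportional to $(\gamma\delta)^2(1-F)^2$ onto an orthogonal state, and the fidelity collapses exactly to the ratio in (\ref{eqn:Fidelity_Upper}).

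The step I expect to be the main obstacle is the upper-bound inequality itself. The feasible set for $\{s_{ij}\}$ is nonconvex because of (\ref{eqn:srelation}), and the orthogonal mixing by $\V{V}$ in (\ref{eqn:U}) scrambles the four $\pmb{\psi}^{(i)}$ so that termwise domination is not available. The likely resolution is a Cauchy--Schwarz argument performed after changing variables to the product-state parameters $(a_j,b_j)$, exploiting the sign pattern of the Hadamard-like $\V{V}$ to cancel the cross-terms between the $C_1$ and $C_4$ contributions, while carefully tracking the strict positivity condition $\sum_i C_i\|\pmb{\psi}^{(i)}\|^2>0$ from \thref{lem:operation} that ensures the fidelity ratio is well-defined.
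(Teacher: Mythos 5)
Your overall architecture matches the paper's: reduce to phase-damping via \thref{lem:simplifybound}, bound the fidelity ratio supplied by \thref{lem:operation}, and exhibit an achieving protocol. But the step you yourself flag as ``the main obstacle'' is exactly where the proof lives, and the resolution you sketch is not the one that works. The difficulty is that the numerator and denominator of the fidelity ratio carry the unequal weights $C_1>C_2=C_3>C_4$ while the target bound is $\tfrac{2C_1}{C_1+C_4}$; the $C_2,C_3$ terms cannot be ``absorbed harmlessly,'' since dropping them from the denominator pushes the ratio in the wrong direction, and a Cauchy--Schwarz estimate in the product-state parameters does not by itself produce a constant that depends on the $C_i$ in this way. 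The paper's key device is the auxiliary \thref{lem:C}: treating $C_4$, then $C_3$, then $C_2$ as a scalar variable $t$, the relevant quadratic $f(t)$ is convex with $f(0)\le0$, so it suffices to verify the inequality at the endpoint where all four coefficients are equal. In that uniform case the sum over $i$ collapses by unitarity of $\V{V}$ (Parseval), $\sum_{i=1}^{4}\big|\,[\,w_{kj}\;\,x_{kj}\;\,y_{kj}\;\,z_{kj}\,]\,\V{v}^{(i)}\big|^2=\sum_{s}|s_{kj}|^2$, and the separability constraint \eqref{eqn:srelation} gives $|s_{12}|^2+|s_{21}|^2\ge2|s_{11}||s_{22}|$, which is precisely what is needed to dominate $|s_{11}+s_{22}|^2$. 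Without this reduction over the coefficients your upper bound is not established.

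On the lower bound: the weights you compute ($(\alpha\beta)^2F^2$ onto $|\Phi^+\rangle$ and $(\gamma\delta)^2(1-F)^2$ onto an orthogonal state) are correct only if the agents keep the control pair when both measurement outcomes correspond to $|1\rangle\langle1|$. Keeping on all matching outcomes, as you state, also retains the $|0\rangle\langle0|$ branch, in which the $|\mu\rangle^{\otimes2}$ component collapses to $\alpha^2|00\rangle+\beta^2|11\rangle$ rather than $|\Phi^+\rangle$, so the fidelity falls short of \eqref{eqn:Fidelity_Upper} whenever $\alpha\neq\beta$. The paper instead invokes the \ac{FP} variant of the proposed algorithm (whose \ac{RSSP} measurement $\V{M}_{\mathrm B}$ with $\kappa=\beta/\alpha$ first symmetrizes $|\mu\rangle$ into $|\Phi^+\rangle$), as recorded in \eqref{eqn:F_update_FP1}. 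This is a one-line fix, but as written your protocol does not attain the claimed value.
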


\begin{proof} The proof is given in Appendix~\ref{pf_thm:UBF}. \end{proof}

\begin{Remark}[Key channel parameters and the optimal fidelity] 
Equation \eqref{eqn:Fidelity_Upper} describes the optimal fidelity as a function of the parameters of the density matrix. 
To understand how parameters of the channel affect the optimal fidelity $F^*$, one can substitute \eqref{eqn:F_pzeta}--\eqref{eqn:gammadelta_pzeta} into \eqref{eqn:Fidelity_Upper} to obtain
\begin{align}
F^*=\frac{1}{2}+\frac{\sqrt{(1-p)(1-|\eta|^2p)}}{(1-p)+(1-|\eta|^2p)}.\label{eqn:Fidelity_Upper_etap}
\end{align}
By taking derivative of \eqref{eqn:Fidelity_Upper_etap} \ac{w.r.t.} to $p$ and $|\eta|$ respectively,
it can verified that $F^*$ is a decreasing function of  $p$ and an increasing function of $|\eta|$.
An intuitive understanding of such trends can be obtained by recalling Remark~\ref{rem:chanel_para} and Fig.~\ref{fig_channels}.
Operator $\M{C}_2$ destroys all entanglement when it operates on a qubit, and
$p$ is proportional to the probability that $\M{C}_2$ operates.
The larger the $p$, the less entanglement there is after qubits pass through the channel, thereby resulting in a lower $F^*$.
The angle at which $\M{C}_2$ rotates a qubit is given by $\arcsin(|\eta|)$.
The larger $|\eta|$, the easier it is to detect which qubits are operated by $\M{C}_2$, thereby resulting in a higher $F^*$.
In particular, when $|\eta|=1$, $F^*=1$ provided that $p<1$.
Therefore, for amplitude-damping channels, it is possible to design recurrence \ac{QED} algorithms that generate maximally entangled qubit pairs as long as the channel does not completely destroy entanglement.~\QEDB
\end{Remark}

\subsection{Achieving the Optimal Fidelity}
\label{subsec:achfid}
The following algorithm first adapts to the channel so that the prepared qubit pairs have density matrices with a structure invariant to the channel.
Then the algorithm employs recurrent operations to progressively improve the fidelity of the kept qubit pairs.
These operations are specially designed to match the prepared density matrix structure, so that the proposed algorithm achieves optimal fidelity in every round of distillation.

\begin{Algno}[Adaptive recurrence \ac{QED} algorithm]{\color{white}.}\vspace{-4mm}\\
\label{alg:dis}
\begin{itemize}
\item{\bf \ac{RSSP}:} For each qubit pair, the agents transform the density matrix into $\check{\V{\rho}}$ using pre-distillation unitary operators $\M{U}_\mathrm{A}$ and $\M{U}_\mathrm{B}$.\footnote{Given a \ac{TKO} channel, if $\eta=0$, $\M{U}_\mathrm{A}$ and $\M{U}_\mathrm{B}$ are determined by \eqref{eqn:UABphase}. Otherwise,
one can obtain $\V{\rho}$ via \eqref{eqn:initialstate}, then perform \ac{SVD} and Schmidt decomposition sequentially to get \eqref{eqn:psi_Sch} and \eqref{eqn:phi_Sch}, and finally determine $\M{U}_\mathrm{A}$ and $\M{U}_\mathrm{B}$ via \eqref{eqn:UAB}. } Then Bob apply pre-distillation measurement operators 
\begin{align}\V{M}_{\mathrm{B}}=
\mybmatrix{1mm}{2mm}{1}{\kappa&0\\0&1},\qquad \V{M}_{\bar{\mathrm{B}}}=\begin{bmatrix}\sqrt{1-\kappa^2}&0\\0&0\end{bmatrix}\label{eqn:M12}
\end{align}
on his qubit, where $\kappa = \frac{{\beta}}{{\alpha}}$.
If the measurement result corresponds to $\V{M}_{\mathrm{B}}$, Bob performs no further action; otherwise, he notifies  Alice via classical communication and the agents discard the qubit pair.
\item{\bf First round distillation:} The agents take two of the kept qubit pairs, perform the following operations, and repeat these operations on all kept qubit pairs.

(i) Each agent locally performs CNOT operation, i.e., $\M{U}=|00\rangle\langle00|+|01\rangle\langle01|+|10\rangle\langle11|+|11\rangle\langle10|$ on the two qubits at hand.

(ii) Each agent measures the target bit (i.e., the qubit in the second pair) using operators $|0\rangle\langle0|$, $|1\rangle\langle1|$, and transmits the measurement result to the other agent via classical communication.

(iii) If their measurement results do not agree, the agents discard the source qubit pair (i.e., the first pair). If the measurement results agree and correspond to $|1\rangle\langle1|$, the agents keep the source qubit pair. If the measurement results agree and correspond to $|0\rangle\langle0|$, the agents
may choose to discard or keep the source qubit pair; 
the approach that discard or keep the qubit pair in this case is referred to as the \ac{FP} or \ac{PP} approach, respectively.

\item{\bf Following rounds:} Agents perform the same operations as in the first round, except that they always adopt the \ac{PP} approach, i.e., keep the source qubit pair as long as the measurement results agree. Repeat this step until the fidelity of the kept qubit pairs exceeds the required threshold.~\QEDB
\end{itemize}
\end{Algno}

For notational convenience, denote the fidelity of the kept qubit pairs after $n$-th rounds of iteration as ${F}_{n}$, where  $F_{0}=F$. The following theorem characterizes the performance of the proposed algorithm.

\begin{Thm}[Performance of the proposed algorithm]\thlabel{thm:dis_perf}
After the \ac{RSSP} and first round of distillation, a qubit pair is kept with probability
\begin{align}
P_1=\left\{
\begin{array}{l@{\;}l}
\dfrac{F_0^2\alpha^2\beta^4 + (1-F_0)^2\beta^2\gamma^2\delta^2}{2F_0\alpha^2\beta^2 + (1-F_0)(\alpha^2\gamma^2 +\beta^2\delta^2)}&
\begin{array}{l}\mbox{for the \ac{FP}}\\ \mbox{approach}\end{array}\vspace{3mm}\\
\dfrac{4F_0^2\alpha^4\beta^4 + (1-F_0)^2(\alpha^2\gamma^2 +\beta^2\delta^2)^2}{4F_0\alpha^4\beta^2 + 2(1-F_0)\alpha^2(\alpha^2\gamma^2 +\beta^2\delta^2)}&
\begin{array}{l}\mbox{for the \ac{PP}}\\ \mbox{approach}\end{array}
\end{array}
\right.\label{eqn:P_FP1}
\end{align}
and fidelity
\begin{align}
F_1=\left\{\begin{array}{l@{\;}l}
\dfrac{F_0^2}{F_0^2+(1-F_0)^2(\frac{\gamma\delta}{\alpha\beta})^2}
&
\begin{array}{l}\mbox{for the \ac{FP}}\\ \mbox{approach}\end{array}\vspace{1.5mm}\\
\dfrac{F_0^2}{F_0^2+\frac{1}{4}(1-F_0)^2(\frac{\gamma^2}{\beta^2}+\frac{\delta^2}{\alpha^2})^2}
&
\begin{array}{l}\mbox{for the \ac{PP}}\\ \mbox{approach}\end{array}
\end{array}\right.\label{eqn:F_update_FP1}
\end{align}
In the $k$-th round ($k=2,3,4,\ldots$) of distillation, a qubit pair is kept with probability 
\begin{align}P_k=\frac{1}{2}\big(F_{k-1}^2+(1-F_{k-1})^2\big)\end{align} 
and fidelity
\begin{align}
F_{k}=\frac{F_{k-1}^2}{F_{k-1}^2+(1-F_{k-1})^2}.\label{eqn:F_update_PP}
\end{align}
\end{Thm}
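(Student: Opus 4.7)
The plan is to propagate the density matrix through the three stages of the algorithm---RSSP, the first round of distillation, and a generic later round---and to exploit throughout the orthogonal decomposition $\mathcal{H}=\mathcal{H}_1\oplus\mathcal{H}_2$ of the two-qubit Hilbert space, where $\mathcal{H}_1=\mathrm{span}\{|00\rangle,|11\rangle\}$ contains the target state $|\Phi^+\rangle$ and $\mathcal{H}_2=\mathrm{span}\{|01\rangle,|10\rangle\}$. The structural fact I will use repeatedly is that the bilateral CNOT plus target measurement routes inputs in $\mathcal{H}_1\otimes\mathcal{H}_1$ and $\mathcal{H}_2\otimes\mathcal{H}_2$ to \emph{agreeing} target outcomes in $\{|00\rangle,|11\rangle\}$, while sending the cross types $\mathcal{H}_1\otimes\mathcal{H}_2$ and $\mathcal{H}_2\otimes\mathcal{H}_1$ to \emph{disagreeing} outcomes that are discarded.

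For the RSSP step I would apply $\mathbb{I}\otimes\V{M}_\mathrm{B}$ with $\kappa=\beta/\alpha$ to $\check{\V{\rho}}$: the $|\mu\rangle$-component contracts to $\sqrt{2}\,\beta|\Phi^+\rangle$ while the $|\nu\rangle$-component stays inside $\mathcal{H}_2$. After normalization the post-RSSP density matrix has the form $\V{\rho}'=F'|\Phi^+\rangle\langle\Phi^+|+(1-F')|\nu'\rangle\langle\nu'|$ with $|\nu'\rangle = a|01\rangle + be^{\imath\theta}|10\rangle\in\mathcal{H}_2$ and explicit $F',a,b$ in terms of $F_0,\alpha,\beta,\gamma,\delta$; the per-pair success probability is $P_\mathrm{RSSP}=(2F_0\alpha^2\beta^2+(1-F_0)(\alpha^2\gamma^2+\beta^2\delta^2))/\alpha^2$.

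For the first round I would expand $\V{\rho}'\otimes\V{\rho}'$ into its four spectral components and push each through the CNOTs basis-vector by basis-vector. A direct computation yields: (i) $|\Phi^+\rangle|\Phi^+\rangle\mapsto|\Phi^+\rangle|\Phi^+\rangle$, so each agreeing target outcome has probability $1/2$ and leaves the source in $|\Phi^+\rangle$; (ii) $|\nu'\rangle|\nu'\rangle$ produces target $|00\rangle$ with probability $a^4+b^4$ (source in $\mathcal{H}_2$) and target $|11\rangle$ with probability $2a^2b^2$ (source $=|\Psi^+\rangle\in\mathcal{H}_2$); (iii) the two cross terms always give disagreement. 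Because the surviving source contributions lie in orthogonal subspaces, no interference arises. Hence $F_1$ equals $F'^2$ divided by $F'^2+(1-F')^2$ for PP (both agreeing outcomes kept) and by $F'^2+4a^2b^2(1-F')^2$ for FP (only $|11\rangle$ kept). Substituting the explicit $F',a,b$ and invoking the identities $\alpha^2\beta^2(\gamma^2/\beta^2+\delta^2/\alpha^2)=\alpha^2\gamma^2+\beta^2\delta^2$ and $a^2b^2=\alpha^2\beta^2\gamma^2\delta^2/(\alpha^2\gamma^2+\beta^2\delta^2)^2$ reproduces (\ref{eqn:F_update_FP1}); multiplying $P_\mathrm{RSSP}^2$ by the conditional keeping probability and a factor $1/2$ for each initial pair's random source-versus-target role yields (\ref{eqn:P_FP1}).

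For subsequent rounds I would run an induction with the invariant $\V{\rho}_k=F_k|\Phi^+\rangle\langle\Phi^+|+(1-F_k)\V{\sigma}_k$, where $\V{\sigma}_k$ is supported on $\mathcal{H}_2$---the base case being the PP first-round output computed above. Applying the $\mathcal{H}_1/\mathcal{H}_2$ routing immediately discards the two cross terms in $\V{\rho}_{k-1}\otimes\V{\rho}_{k-1}$, leaving only the $|\Phi^+\rangle|\Phi^+\rangle$ diagonal (weight $F_{k-1}^2$, source $|\Phi^+\rangle$) and the $\V{\sigma}_{k-1}\otimes\V{\sigma}_{k-1}$ diagonal (weight $(1-F_{k-1})^2$, source in $\mathcal{H}_2$). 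The invariant is preserved, the orthogonality of $\V{\sigma}_k$ to $|\Phi^+\rangle$ produces (\ref{eqn:F_update_PP}), and the per-pair probability $\tfrac{1}{2}(F_{k-1}^2+(1-F_{k-1})^2)$ follows from the agreement probability times the yield-halving factor. The main obstacle is the first-round bookkeeping: because the orthogonal complement of $|\Phi^+\rangle$ in $\V{\rho}'$ is the non-Bell state $|\nu'\rangle$ rather than $|\Psi^+\rangle$, the tensor square $|\nu'\rangle|\nu'\rangle$ must be expanded component-wise through the CNOTs and the two agreeing target outcomes tracked separately; once this is done the remaining algebra is routine, and the later-round argument is a streamlined re-run of the same computation.
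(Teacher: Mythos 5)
Your overall route is the same as the paper's: a preliminary lemma computing the post-RSSP state (the paper's Lemma~E.1 gives exactly your $F'=\tilde F$, $a=\tilde\gamma=\alpha\gamma/\sqrt{\alpha^2\gamma^2+\beta^2\delta^2}$, $b=\tilde\delta=\beta\delta/\sqrt{\alpha^2\gamma^2+\beta^2\delta^2}$, and your $P_{\mathrm{RSSP}}$ matches its $P_{\mathrm s}$), followed by an explicit expansion of the tensor square through the bilateral CNOT and target measurement, and an induction for rounds $k\ge 2$. Your $\mathcal{H}_1/\mathcal{H}_2$ routing table is correct and is exactly what the paper's $|\check\Omega^{(i)}\rangle$ computation shows: the $|\Phi^+\rangle|\Phi^+\rangle$ block splits $1/2$--$1/2$ over the agreeing outcomes with source $|\Phi^+\rangle$, the $|\nu'\rangle|\nu'\rangle$ block gives outcome $|11\rangle$ with weight $2a^2b^2$ and source $|\Psi^+\rangle$ and outcome $|00\rangle$ with weight $a^4+b^4$ and source in $\mathcal{H}_2$, and the cross blocks always disagree. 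The resulting fidelity expressions, the substitution identities, and the $k\ge2$ invariant all check out and reproduce \eqref{eqn:F_update_FP1} and \eqref{eqn:F_update_PP}.

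There is, however, a concrete error in your derivation of the keeping probability \eqref{eqn:P_FP1}: you multiply by $P_{\mathrm{RSSP}}^2$, but the correct per-initial-pair accounting carries only a \emph{single} factor of $P_{\mathrm{RSSP}}$. The paper's formula is $P_1=\tfrac{1}{2}P_{\mathrm s}P_{\mathrm f}$ (respectively $\tfrac{1}{2}P_{\mathrm s}P_{\mathrm p}$): out of $N$ initial pairs, $NP_{\mathrm s}$ survive RSSP, these survivors are then grouped into $NP_{\mathrm s}/2$ twosomes, and each twosome yields one kept pair with conditional probability $P_{\mathrm f}$, giving $NP_{\mathrm s}P_{\mathrm f}/2$ outputs. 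Your $P_{\mathrm{RSSP}}^2$ corresponds to fixing partners before RSSP so that a partner's failure also dooms the given pair; that convention yields $\tfrac{1}{2}P_{\mathrm s}^2P_{\mathrm f}$, which differs from \eqref{eqn:P_FP1} by the factor $P_{\mathrm s}=\big(2F_0\alpha^2\beta^2+(1-F_0)(\alpha^2\gamma^2+\beta^2\delta^2)\big)/\alpha^2$ (the denominators of the two expressions are $\alpha^2$ versus $2F_0\alpha^2\beta^2+(1-F_0)(\alpha^2\gamma^2+\beta^2\delta^2)$, so they do not coincide except in degenerate cases). Replace $P_{\mathrm{RSSP}}^2$ by $P_{\mathrm{RSSP}}$ and the rest of your argument goes through and recovers the theorem.
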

\begin{proof}
The proof is given in Appendix~\ref{pf_thm:dis_perf}.
\end{proof}

In the following, algorithms that adopt the \ac{FP} and \ac{PP} approaches in the first round of distillation are referred to as \ac{FP} and \ac{PP} algorithms, respectively.

\begin{Remark}[Convergence speed of fidelity] \label{remark:converge}
For the \ac{FP} algorithm, the density matrix of the kept qubit pair after the $k$-th round of distillation is
\begin{align*}
\V{\rho}^{(k)}=F_{k-1}|\Phi^+\rangle\langle\Phi^+| +(1-F_{k-1})|\Psi^+\rangle\langle\Psi^+|.
\end{align*}
In this case, by comparing \eqref{eqn:Fidelity_Upper} with \eqref{eqn:F_update_FP1} or \eqref{eqn:F_update_PP}, it can be observed that the \ac{FP} algorithm achieves the optimal fidelity in every round of distillation.
This implies that the \ac{FP} algorithm attains the fastest convergence speed \ac{w.r.t.} the rounds of distillation.

The \ac{PP} algorithm achieves a lower fidelity in the first round compared to the \ac{FP} algorithm. 
On the other hand, \eqref{eqn:P_FP1} shows that the probability of keeping a qubit pair in the first round is higher with the \ac{PP} algorithm compared to the \ac{FP} algorithm by a factor more than  2. In particular, when the channel is phase-damping, i.e., $\alpha=\beta=\gamma=\delta=\frac{1}{\sqrt{2}}$, the \ac{PP} algorithm doubles the probability of keeping a qubit pair without lowering the fidelity achieved in the first round.

For the first recurrence \ac{QED} algorithm (will be referred to as BBPSSW algorithm in this work) proposed in \cite{BenBraPopSchSmoWoo:96}, the fidelity of the kept qubit pairs after $k$-th round of distillation is given by
\begin{align}
{F}_k=\frac{F_{k-1}^2+\frac{1}{9}(1-F_{k-1})^2}{F_{k-1}^2+\frac{2}{3}F_{k-1}(1-F_{k-1})+\frac{5}{9}(1-F_{k-1})^2}\,.\label{eqn:speed-old}
\end{align}
Therefore, when $F_0>\frac{1}{2}$, it can be shown using \eqref{eqn:speed-old} that
\begin{align}
\lim_{k\rightarrow \infty}\frac{1-F_k}{1-F_{k-1}}=\frac{2}{3} \,. \label{eqn:speed1}
\end{align}
For the proposed algorithms, when  $F_0>\frac{1}{2}$, it can be shown using \eqref{eqn:F_update_PP} that
\begin{align}
\lim_{k\rightarrow \infty}\frac{1-F_k}{1-F_{k-1}}=0 \,, \qquad
 \lim_{k\rightarrow \infty}\frac{1-F_k}{(1-F_{k-1})^2}=1 \,.\label{eqn:speed2}
\end{align}
Equation
\eqref{eqn:speed1} shows that with the BBPSSW algorithm, the fidelity of the qubit pairs converges to $1$ linearly at rate $\frac{2}{3}$, whereas \eqref{eqn:speed2} shows that with the proposed algorithms, the fidelity converges to $1$ quadratically.
Hence, the convergence speed of the proposed algorithms is significantly improved, i.e., from linear to quadratic, compared to the BBPSSW algorithm.~\QEDB
\end{Remark}

\begin{Remark}[Connection to the \ac{QPA} algorithm]\label{remark:qpa} When the channel is phase-damping, i.e., $\eta=0$, i) the pre-distillation unitary operators  $\M{U}_{\mathrm{A}}=\M{U}_{\mathrm{B}}=\V{H}$ according to \eqref{eqn:UABphase}; ii) the pre-distillation measurement operator $\V{M}_{\mathrm{B}}=\mathbb{I}_2$ since $\alpha=\beta$ according to \eqref{eqn:alphabeta_pzeta}. 
In this case, both local operators employed by Alice and Bob in the \ac{RSSP} are equal to the Hadamard transform $\V{H}$, and hence the \ac{PP} algorithm becomes the \ac{QPA} algorithm in \cite{DeuEkeJozMacPopSan:96}.
Therefore, the \ac{QPA} algorithm is a special case of \ac{PP} algorithm, which employs fixed pre-distillation operators for all channels.
With such non-adaptive pre-distillation operators, the convergence of the fidelity achieved by the \ac{QPA} algorithm is not guaranteed \cite{DeuEkeJozMacPopSan:96}.
With the proposed adaptive pre-distillation operators, the fidelity achieved by both \ac{FP} and \ac{PP} algorithms converges quadratically for \ac{TKO} channels.
The proposed algorithms may be applied to more general channels, yet their convergence property for such channels remains to be characterzied.~\QEDB
\end{Remark}

\begin{Remark}[Benefit of channel adaptation] In the proposed algorithms, the channel adaptation takes place in the \ac{RSSP}.
As shown in \thref{thm:dis_perf} and Remark~\ref{remark:converge}, 
despite its simplicity of involving single-qubit operations only in the initial step, 
\ac{RSSP} is the keystone to improve the effectiveness of distillation for \ac{TKO} channels.
With the BBPSSW algorithm \cite{BenBraPopSchSmoWoo:96}, in addition to the distillation operations, random bilateral rotations are required to restore the desired density matrix structure in every round of distillation.
With the \ac{QPA} algorithm, no random rotations are required, yet the density matrix structure may not be preserved for different rounds of distillation.
In the proposed algorithms, the \ac{RSSP} adapts to the channel  so that the prepared qubit pairs have density matrices with a structure invariant to the channel.
As a result, the distillation operation itself, which involves only the CNOT operation and single-qubit measurements, is sufficient to maintain the density matrix structure in every round of distillation. 
This feature enables simple \ac{QED} algorithm with guaranteed convergence.
Hence, channel adaptation also improves the implementability of \ac{QED} algorithms.~\QEDB
\end{Remark}

\section{Numerical Results}
This section provides numerical results to demonstrate the performance of the proposed algorithms. In particular, the proposed \ac{FP} and \ac{PP} algorithms are compared with the BBPSSW algorithm in \cite{BenBraPopSchSmoWoo:96} and the \ac{QPA} in \cite{DeuEkeJozMacPopSan:96} for a required fidelity $F_{\mathrm{th}}=0.99$.

\begin{figure*}[t] \centering
\psfrag{0.5}[Br][Br][0.6]{0.5\hspace{0.3mm}}
\psfrag{0.6}[Br][Br][0.6]{0.6\hspace{0.3mm}}
\psfrag{0.7}[Br][Br][0.6]{0.7\hspace{0.3mm}}
\psfrag{0.8}[Br][Br][0.6]{0.8\hspace{0.3mm}}
\psfrag{0.9}[Br][Br][0.6]{0.9\hspace{0.3mm}}
\psfrag{1}[Br][Br][0.6]{1\hspace{0.3mm}}
\psfrag{0}[tt][tt][0.6]{0}
\psfrag{4}[tt][tt][0.6]{4}
\psfrag{5}[tt][tt][0.6]{5}
\psfrag{8}[tt][tt][0.6]{8}
\psfrag{10}[tt][tt][0.6]{10}
\psfrag{12}[tt][tt][0.6]{12}
\psfrag{15}[tt][tt][0.6]{15}
\psfrag{16}[tt][tt][0.6]{16}
\psfrag{24}[tt][tt][0.6]{24}
\psfrag{Rounds of distillation}[tc][tc][0.7]{Rounds of distillation}
\psfrag{Fidelity}[tc][tc][0.7]{Fidelity}
\psfrag{Classical           Algorithm}[cl][cl][0.7]{\hspace{-2.5mm}BBPSSW Algorithm}
\psfrag{FP      Algorithm}[cl][cl][0.7]{\hspace{-2.5mm}FP Algorithm}
\psfrag{PP      Algorithm}[cl][cl][0.7]{\hspace{-2.5mm}PP Algorithm}
\psfrag{QPA      Algorithm}[cl][cl][0.7]{\hspace{-2.5mm}QPA Algorithm}
\psfrag{P}[cc][cc][0.7]{Phase-damping Channel}
\psfrag{M}[cc][cc][0.7]{``Mid-point" Channel}
\psfrag{A}[cc][cc][0.7]{Amplitude-damping Channel}
\hspace{-10mm}
\includegraphics[scale=0.57]{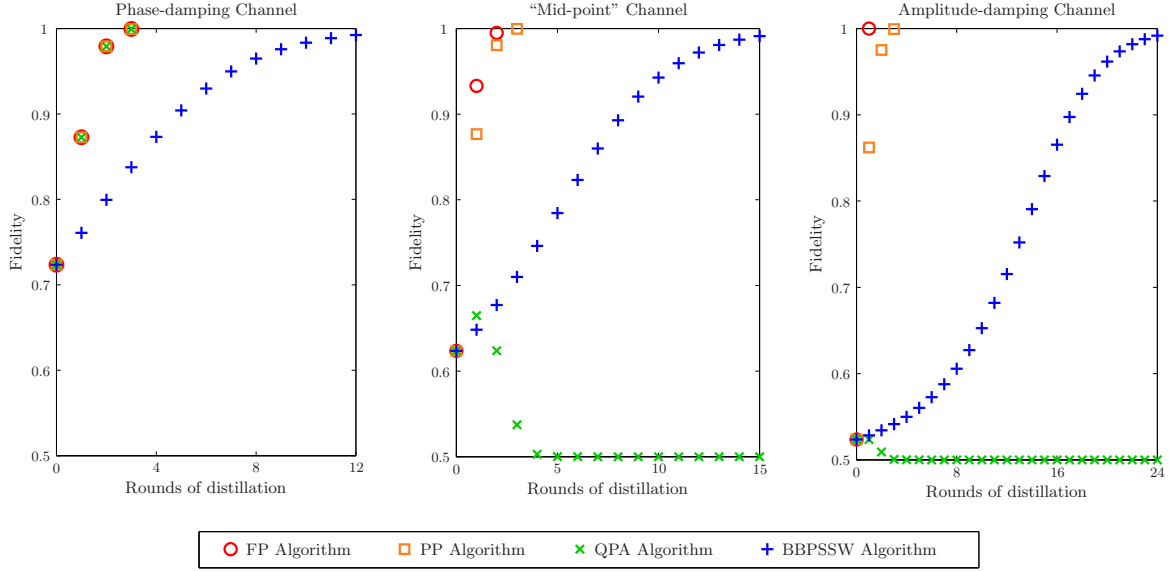}
\caption{\small The achieved fidelity as a function of the rounds of distillation for a phase-damping channel, a ``mid-point" channel, and an amplitude-damping channel. For all channels, the noise severity parameter $p=0.8$.}
\label{fig_speed}%
\end{figure*}

Fig.~\ref{fig_speed} shows the fidelity of kept qubit pairs as a function of the rounds of distillation for three types of channels, i.e., 
a phase-damping channel, a ``mid-point"  channel\footnote{This channel ($\arcsin(|\eta|)=\frac{\pi}{4}$) can be thought of as the mid-point of phase-damping channels ($\arcsin(|\eta|) = 0$) and amplitude-damping channels ($\arcsin(|\eta|)=\frac{\pi}{2}$).}, and an amplitude-damping channel.
When the channel is phase-damping, the fidelity achieved by \ac{FP}, \ac{PP}, and \ac{QPA} algorithms are the same, which is consistent with the observations made in Remark~\ref{remark:converge} and~\ref{remark:qpa}.
When the channel is ``mid-point" or amplitude-damping, the \ac{QPA} algorithm does not achieve the required fidelity, illustrating its converge issue.
The \ac{FP}, \ac{PP}, and the BBPSSW algorithm achieve the required fidelity on all channels, with the proposed algorithms using much less rounds of distillation.
For instance, when the channel is amplitude-damping, the BBPSSW algorithm requires 24 rounds of distillation, whereas the \ac{FP} and \ac{PP} algorithms only require one and three rounds respectively. 
Since the yield is reduced by at least half after each round of distillation, the yield of the proposed algorithms are significantly higher than the classical one for all the considered channel.

\begin{figure}[t] \centering
\psfrag{1-0}[Br][Br][0.6]{$10^0$\hspace{2.1mm}}
\psfrag{1-1}[Br][Br][0.6]{$10^{-1}$\hspace{-0.3mm}}
\psfrag{1-2}[Br][Br][0.6]{$10^{-2}$\hspace{0.3mm}}
\psfrag{1-3}[Br][Br][0.6]{$10^{-3}$\hspace{0.3mm}}
\psfrag{1-4}[Br][Br][0.6]{$10^{-4}$\hspace{0.3mm}}
\psfrag{0}[tt][tt][0.6]{0.0}
\psfrag{0.1}[tt][tt][0.6]{0.1}
\psfrag{0.2}[tt][tt][0.6]{0.2}
\psfrag{0.3}[tt][tt][0.6]{0.3}
\psfrag{0.4}[tt][tt][0.6]{0.4}
\psfrag{0.5}[tt][tt][0.6]{0.5}
\psfrag{0.6}[tt][tt][0.6]{0.6}
\psfrag{0.7}[tt][tt][0.6]{0.7}
\psfrag{0.8}[tt][tt][0.6]{0.8}
\psfrag{0.9}[tt][tt][0.6]{0.9}
\psfrag{1}[tt][tt][0.6]{1.0}
\psfrag{p}[cc][cc][0.75]{$p$}
\psfrag{Yield}[tc][tc][0.7]{Yield}
\psfrag{Classical           Algorithm}[cl][cl][0.6]{\hspace{1mm}BBPSSW algorithm}
\psfrag{FP}[cl][cl][0.6]{\hspace{1mm}\ac{FP} algorithm}
\psfrag{PP}[cl][cl][0.6]{\hspace{1mm}\ac{PP} algorithm}
\includegraphics[scale=0.43]{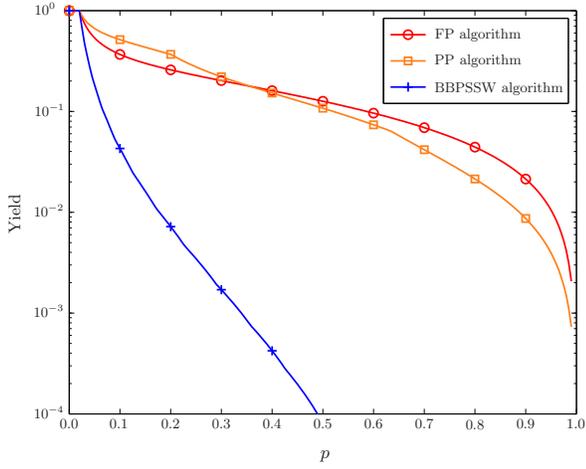}
\caption{\small The efficiency of different recurrence \ac{QED} algorithms
as a function of noise severity $p$ for amplitude-damping channels ($p\in[0,0.99]$,
$\eta = 1$). The \ac{QPA} algorithm is not plotted as it does not achieve
the required fidelity on amplitude-damping channels.}
\label{fig_P}
\end{figure}

Fig.~\ref{fig_P} shows the yield\footnote{Suppose the fidelity $F_k$ exceeds the required threshold $F_{\mathrm{th}}$ after $K$ rounds of distillation. 
In the following,  the agents are assumed to generate qubit pairs which go through $K-1$ and $K$ rounds of distillation with probability $\frac{F_{K}-F_{\mathrm{th}}}{F_{K}-F_{K-1}}$ and $\frac{F_{\mathrm{th}}-F_{K-1}}{F_{K}-F_{K-1}}$, respectively. This assumption assures that the average output fidelity of the algorithms is always $F_{\mathrm{th}}$, enabling a fair comparison among different scenarios. Denote the yield after $K-1$ and $K$ rounds of distillation as $Y_{K-1}$ and $Y_k$, respectively, then the average yield of the algorithm is given by $\frac{F_{K}-F_{\mathrm{th}}}{F_{K}-F_{K-1}}Y_{k-1}+\frac{F_{\mathrm{th}}-F_{K-1}}{F_{K}-F_{K-1}}Y_k$.} of the distillation algorithms as a function of the noise severity $p$. 
While the yield of all algorithms decreases with increasing $p$, the proposed algorithms are much more resilient to noise compared to the BBPSSW algorithm. Comparing the two proposed algorithms, the \ac{FP} algorithm performs better for large $p$, whereas the \ac{PP} algorithm performs better for small $p$. This shows that when the noise is severe, it is beneficial to the increase the achieved fidelity at a cost of reducing the probability of keeping qubit pairs.

Finally, Fig.~\ref{fig_eta} shows the yield of the distillation algorithms as a function of the channel type parameter $|\eta|$. 
The \ac{QPA} algorithm has the same efficiency as the \ac{PP} algorithm when the channel is phase-damping, 
which is consistent with Remark~\ref{remark:qpa}.
Yet the \ac{QPA} algorithm does not achieve the required fidelity when $\arcsin{|\eta|}\ge 0.024\pi$.
This illustrates the importance of channel adaptation.
Comparing the two proposed algorithms, the \ac{FP} algorithm is more efficient when the channel tends towards an amplitude-damping channel (i.e., $|\eta|$ approaches 1), and the \ac{PP} algorithm is more efficient when the channel tends towards a phase-damping channel (i.e., $|\eta|$ approaches 0). This is consistent with \thref{thm:Str_rho} and \thref{thm:dis_perf}, which show that the benefit of increasing the fidelity by adopting the \ac{FP} algorithm is greater when $\eta$ is close to 1 and vice versa. 
It can also be seen that the yield of the \ac{PP} algorithm is twice of the \ac{FP} algorithm when $\eta=0$.
This is consistent with the observation made in Remark~\ref{remark:converge} that the \ac{PP} algorithm doubles the probability of keeping a qubit pair without lowering the fidelity achieved in the first round when the channel is phase-damping.
\begin{figure}[t] \centering
\psfrag{0.0}[Br][Br][0.6]{0\hspace{0.3mm}}
\psfrag{0.02}[Br][Br][0.6]{0.02\hspace{0.3mm}}
\psfrag{0.04}[Br][Br][0.6]{0.04\hspace{0.3mm}}
\psfrag{0.06}[Br][Br][0.6]{0.06\hspace{0.3mm}}
\psfrag{0.08}[Br][Br][0.6]{0.08\hspace{0.3mm}}
\psfrag{0.10}[Br][Br][0.6]{0.10\hspace{0.3mm}}
\psfrag{0.12}[Br][Br][0.6]{0.12\hspace{0.3mm}}
\psfrag{0.14}[Br][Br][0.6]{0.14\hspace{0.3mm}}
\psfrag{0.16}[Br][Br][0.6]{0.16\hspace{0.3mm}}
\psfrag{0}[tt][tt][0.6]{0}
\psfrag{0.1}[tt][tt][0.6]{}
\psfrag{0.2}[tt][tt][0.6]{$0.1\pi$}
\psfrag{0.3}[tt][tt][0.6]{}
\psfrag{0.4}[tt][tt][0.6]{$0.2\pi$}
\psfrag{0.5}[tt][tt][0.6]{}
\psfrag{0.6}[tt][tt][0.6]{$0.3\pi$}
\psfrag{0.7}[tt][tt][0.6]{}
\psfrag{0.8}[tt][tt][0.6]{$0.4\pi$}
\psfrag{0.9}[tt][tt][0.6]{}
\psfrag{1}[tt][tt][0.6]{$0.5\pi$}
\psfrag{asineta}[tc][tc][0.75]{$\arcsin(|\eta|)$}
\psfrag{Yield}[tc][tc][0.7]{Yield}
\psfrag{FP}[cl][cl][0.6]{\hspace{0.5mm}FP algorithm}
\psfrag{PP}[cl][cl][0.6]{\hspace{0.5mm}PP algorithm}
\psfrag{QPA Algorithm}[cl][cl][0.6]{\hspace{0.5mm}\ac{QPA} algorithm}
\psfrag{Classical           Algorithm}[cl][cl][0.6]{\hspace{0.5mm}BBPSSW algorithm}
\includegraphics[scale=0.43]{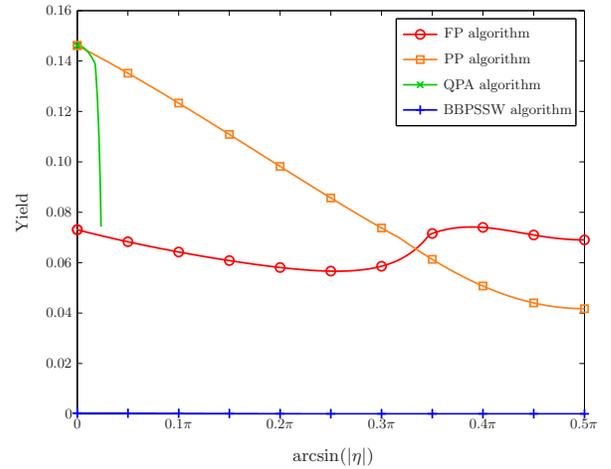}
\caption{\small The efficiency of the two proposed \ac{QED} algorithms as a function of channel type parameter $|\eta|$ ($p = 0.7$, $|\eta|\in[0,1]$). The BBPSSWBBPSSW algorithm algorithm is not plotted as its yield is between $2.9\times10^{-4}$ and $3.8\times 10^{-7}$ on the considered channels.}
\label{fig_eta}%
\end{figure}

\section{Conclusion}\label{sec:conclude}
Among various types of \ac{QED} algorithms, the recurrence ones require quantum operations on the minimum number of qubits and can generate maximally entangled qubit pairs even when the noise in the channel is severe.
Despite their advantages, the efficiency issue of recurrence \ac{QED} algorithms has not been thoroughly investigated in the literature.
In this paper, we first characterize the effect of a single-qubit \ac{TKO} channel on the entanglement of a qubit pair shared by the agents via this channel.
We then determine the optimal fidelity that can be achieved by performing \ac{LOCC} on two of such qubit pairs.
Finally we propose two adaptive recurrence \ac{QED} algorithms, one of which achieves the optimal fidelity.
The proposed algorithms preserve the density matrix structure in every round of distillation, avoiding the need of additional random rotations.
This enable simple \ac{QED} algorithms with guaranteed convergence for \ac{TKO} channels.
In fact, the convergence speed of both algorithms are improved from linear to quadratic compared to the BBPSSW algorithm.
Numerical results confirm that the proposed algorithms significantly improve the efficiency of recurrence \ac{QED} algorithms.
These results also indicate that the benefit of achieving the optimal fidelity is greater when the noise is severe, or the channel tends towards an amplitude-damping channel.

\section*{Acknowledgement}
The authors would like to thank Aram Harrow and Peter W. Shor for the valuable discussions and suggestions.

\appendix
\section{Proof of \thref{lem:channel}}
\label{pf_lem:channel}
Consider a single-qubit \ac{TKO} channel represented by $\M{C}_1$, $\M{C}_2$. We
will first prove the theorem for the case in which $\mathrm{rank}\{\M{C}_2\}=1$, then show that the case in which $\mathrm{rank}\{\M{C}_2\}=2$
can be transformed into the prior case.

When $\mathrm{rank}\{\M{C}_2\}=1$,  \ac{SVD} of $\M{C}_2$ shows that there exists $|i\rangle, |j\rangle \in \mathbb{C}^2$, $p\in(0,1]$, and $\epsilon\in\mathbb{R}$ such that
\begin{align}
\M{C}_2=\sqrt{p}e^{\imath\epsilon}|i\rangle\langle j|.\label{eqn:C2}
\end{align}
Noting that a quantum operator is invariant up to an overall phase change, $\epsilon$ can be any real number.
Recall from \eqref{eqn:channel_identity} that
\begin{align}
\M{C}^\dag_1\M{C}_1=\mathbb{I}_2-\M{C}^\dag_2\M{C}_2.\label{eqn:C1C1dag}
\end{align}
Substituting the \ac{SVD} of $\M{C}_1=\M{U}_\mathrm{c}\V{D}_\mathrm{c}\V{V}^\dag_\mathrm{c}$ and \eqref{eqn:C2}  into \eqref{eqn:C1C1dag}, one can get 
\begin{align}
\V{V}_\mathrm{c}\V{D}^2_\mathrm{c}\V{V}^\dag_\mathrm{c} = |\tilde{j}\rangle\langle\tilde{j}| + (1-p)|j\rangle\langle j|\label{eqn:VDV}
\end{align}
where $ \langle j|\tilde{j}\rangle=0$. Since $\V{D}^2_\mathrm{c}$ is diagonal and $\V{V}_\mathrm{c}$ is unitary, 
\begin{align*}\V{D}_\mathrm{c}=\begin{bmatrix}1&0\\0&\sqrt{1-p}\end{bmatrix}, \quad\mbox{and}\quad \V{V}_\mathrm{c}=\big[|\tilde{j}\rangle \; |j\rangle\big].
\end{align*}
Hence, there exists $|\tilde{k}\rangle$ and $|k\rangle$ with $\langle \tilde{k}| k\rangle =0$ such that
\begin{align}
\M{C}_1=|\tilde{k}\rangle\langle \tilde{j}|+\sqrt{1-p}|k\rangle\langle\label{eqn:C1}
j|.
\end{align}

It can be verified that $\M{C}_1$ in \eqref{eqn:C1} and and $\M{C}_2$ in \eqref{eqn:C2} can be expressed in the form given in \eqref{eqn:channel_structure}, with
\begin{align}
\begin{split}
\M{U}&=|\tilde{k}\rangle\langle 0|+|k\rangle\langle 1|\\
\V{V}&=|\tilde{j}\rangle\langle 0|+|j\rangle\langle 1|\\
\eta&=e^{\imath\epsilon}\langle \tilde{k} |i\rangle\\
\zeta&=e^{\imath\epsilon}\langle k |i\rangle\\
\epsilon &= -\mathrm{pha}\{\langle k |i\rangle\}.
\end{split}\label{eqn:UVeps}
\end{align}
This completes the proof for the case with  $\mathrm{rank}\{\M{C}_2\}=1$.

Now consider the case in which  $\mathrm{rank}\{\M{C}_2\}=2$. Since $\M{C}_2$
is full rank, $\det\{\M{C}_2\}\neq 0$.  Consider equation
\begin{align}
\det\{-\M{C}_1+x\M{C}_2\}=0.\label{eqn:det0}
\end{align}
This is a second-order polynomial equation of $x$, for which the coefficient
of the second-order term is $\det\{\M{C}_2\}\neq 0$. Therefore, the
fundamental theorem of algebra implies that \eqref{eqn:det0} must have at least one solution. Denote
$x_0$ as one of the solutions of \eqref{eqn:det0}.
Recall from \cite[Sec 3.3]{Pre:B}, any single-qubit \ac{TKO} channel with operators $\{\M{C}_k\}$ can be equivalently represented by operators $\{\tilde{\M{C}}_k\}$ satisfying
\begin{align}
\big[\tilde{\M{C}}_1 \; \tilde{\M{C}}_2\big]
=\big[\M{C}_1  \; \M{C}_2\big](\V{A}
\otimes\mathbb{I}_{2})\label{eqn:tildeC}
\end{align}
where $\V{A}$
is an arbitrary unitary matrix.
In particular, let
\begin{align*}
\V{A}=\frac{1}{\sqrt{1+|x_0|^2}}\begin{bmatrix}x_0^\dag&-1\\1&x_0\end{bmatrix}\end{align*}
then $\det\{\tilde{\M{C}}_2\}=\dfrac{\det\{-\M{C}_1+x_0\M{C}_2\}}{1+|x_0|^2}=0$.
Thus $\mathrm{rank}\{\tilde{\M{C}}_2\}\le1$. If it were the case that $\mathrm{rank}\{\tilde{\M{C}}_2\}=0$, then $\M{C}_2=\V{0}$ implying that the channel has only one operator. This contradicts that the channel has two operators. Hence, $\mathrm{rank}\{\tilde{\M{C}}_2\}=1$,
which is the case that has been proven above.

\section{Proof of \thref{thm:Str_rho}}
\label{pf_thm:Str_rho}
Since $\mathrm{rank}\{\V{\rho}_0\}=1$, and the channel has only two operators,
from \eqref{eqn:initialstate}, $\mathrm{rank}\{\V{\rho}\}\le2$. 
Since density matrices are Hermitian, the spectral decomposition gives
\begin{align}
\V{\rho}=F|\psi\rangle\langle\psi| + (1-F)|\phi\rangle\langle\phi|\label{eqn:int_dense_I}
\end{align}
where
\begin{align}
\langle\psi|\phi\rangle=0.\label{eqn:phipsiorth}
\end{align}
In \eqref{eqn:int_dense_I}, we have used the fact that density matrices have trace 1. Without loss of generality, assume $F\in[\frac{1}{2},1]$.

If $p = 0$, \eqref{eqn:initialstate} and \eqref{eqn:channel_structure2} imply that $\V{\rho}=\V{\rho}_0$. Setting $\M{U}_{\mathrm{A}}=\M{U}_{\mathrm{B}}=\mathbb{I}_2$ in \eqref{eqn:int_dense_II}, it is straightforward that the theorem holds.
Also, if $\eta = 0$, the channel is phase-damping. Then
\begin{align*}
\V{\rho} &= \frac{1}{2}\big( (|00\rangle + \sqrt{1-p}|11\rangle)(\langle00| + \sqrt{1-p}\langle11|) + p|11\rangle\langle11| \big)\\
&= \frac{1}{2}\big( |00\rangle\langle00| + \sqrt{1-p} |00\rangle\langle11| + \sqrt{1-p} |11\rangle\langle00| +  |11\rangle\langle11| \big)\\
&= F|\psi\rangle\langle\psi| +(1-F)|\phi\rangle\langle\phi|
\end{align*}
where
\begin{align*}
F &=\frac{1+\sqrt{1-p}}{2}\\
|\psi\rangle &=\frac{1}{\sqrt{2}}(|00\rangle+|11\rangle)\\
|\phi\rangle &=\frac{1}{\sqrt{2}}(|00\rangle-|11\rangle).
\end{align*}
Setting both local unitary operators  in \eqref{eqn:int_dense_II} to be Hadamard transform, i.e.,
\begin{align}
\M{U}_{\mathrm{A}}=\M{U}_{\mathrm{B}}=\V{H}=\frac{1}{\sqrt{2}}\begin{bmatrix}1&1\\1&-1\end{bmatrix}\label{eqn:UABphase}
\end{align}
it is easy to see that the theorem also holds for the case of $\eta = 0$.
Therefore, the following analysis consider the case for which $p \in (0,1)$ and $|\eta|>0$.

We first determine the value of $F$. Set $\V{A}=\begin{bmatrix}\kappa^\dag&-\lambda^\dag\\\lambda&\kappa\end{bmatrix}$ in \eqref{eqn:tildeC} with $\kappa, \lambda\in\mathbb{C}$ such that $|\kappa|^2+|\lambda|^2=1$. Then, it can be shown that
\begin{align}
\begin{split}
\V{\rho}&\stackrel{(a)}{=} \sum_{k=1}^{2}(\mathbb{I}_2\otimes\tilde{\M{C}}_k)|\Phi^+\rangle\langle\Phi^+|(\mathbb{I}_2\otimes\tilde{\M{C}}_k)^\dag\\
&\stackrel{(b)}{=}\big(\mathbb{I}_2\otimes(\kappa^\dag\M{C}_1+\lambda\M{C}_2)\big)|\Phi^+\rangle\langle\Phi^+|
\big(\mathbb{I}_2\otimes(\kappa^\dag\M{C}_1+\lambda\M{C}_2)\big)^\dag
\\&+\big(\mathbb{I}_2\otimes(-\lambda^\dag\M{C}_1+\kappa\M{C}_2)\big)|\Phi^+\rangle\langle\Phi^+|
\big(\mathbb{I}_2\otimes(-\lambda^\dag\M{C}_1+\kappa\M{C}_2)\big)^\dag\\
&\stackrel{(c)}{=}\frac{1}{2}\big(\V{v}_1\V{v}^\dag_1+\V{v}_2\V{v}^\dag_2\big),
\end{split}
\label{eqn:rho_abc}
\end{align}
where
\begin{align*}
\V{v}_1=\begin{bmatrix}\kappa^\dag\\ 0\\\lambda\eta\sqrt{p}\\\kappa^\dag\sqrt{1-p}+\lambda\zeta\sqrt{p}\end{bmatrix},\;\;
\V{v}_2=\begin{bmatrix}-\lambda^\dag\\ 0\\\kappa\eta\sqrt{p}\\-\lambda^\dag\sqrt{1-p}+\kappa\zeta\sqrt{p}\end{bmatrix}.
\end{align*}
In \eqref{eqn:rho_abc}, (a) is due to \eqref{eqn:initialstate} together with the equivalence between $\{\M{C}_k\}$ and $\{\tilde{\M{C}}_k\}$, (b) is due to \eqref{eqn:tildeC}, and (c) is due to \eqref{eqn:channel_structure2}.

In order to make last line of \eqref{eqn:rho_abc} a spectral decomposition of $\V{\rho}$, it is necessary to make $\V{v}_1$ and $\V{v}_2$ orthogonal. A sufficient condition for $\V{v}^\dag_1\V{v}_2=0$ is given by 
\begin{align*}
\kappa = \bigg(\frac{\sqrt{\frac{1-p}{1-|\eta|^2p}}+1}{2}\bigg)^\frac{1}{2},\qquad \lambda = \sqrt{1-\kappa^2}.
\end{align*}
Seting $|\psi\rangle = \frac{\V{v}_1}{||\V{v}_1||}$ and $|\phi\rangle = \frac{\V{v}_2}{||\V{v}_2||}$ in the last line of \eqref{eqn:rho_abc}, one can get
\begin{align}
\V{\rho}=\frac{1}{2}||\V{v}_1||^2|\psi\rangle\langle\psi| + \frac{1}{2}||\V{v}_2||^2|\phi\rangle\langle\phi| \label{eqn:int_dense_spe}
\end{align}
which is a spectral decomposition of $\V{\rho}$. Since the spectrum of a matrix is unique, by comparing  \eqref{eqn:int_dense_spe}  with \eqref{eqn:int_dense_I}, one gets
\begin{align}
F& = \frac{1}{2}||\V{v}_1||^2\nonumber
\\&=\frac{1}{2}+\frac{1}{2}\sqrt{(1-p)(1-|\eta|^2p)}\label{eqn:F_pzeta2}
\end{align}
which proves \eqref{eqn:F_pzeta}.

We next show \eqref{eqn:mu}, \eqref{eqn:nu}.
Perform Schmidt decomposition on the eigenvectors of $\V{\rho}$ as
\begin{align}
|\psi\rangle &= \alpha|wx\rangle+\beta|\tilde{w}\tilde{x}\rangle,\label{eqn:psi_Sch}\\
|\phi\rangle &= \gamma|yz\rangle +\delta|\tilde{y}\tilde{z}\rangle,\label{eqn:phi_Sch}
\end{align}
where $\langle s|\tilde{s}\rangle = 0$ for $s\in\{w,x,y,z\}$, and $\alpha,\beta,\gamma,\delta\in[0,1]$ satisfying
\begin{align}\alpha^2 +\beta^2 =\gamma^2
+\delta^2=1.\label{eqn:alpha_normal} \end{align}
 Without
loss of generality, assume $\beta\le\alpha$, $\gamma\le \delta$.

From \eqref{eqn:F_pzeta2},  when $p\in(0,1)$, $F\in(\frac{1}{2},1)$. Substituting \eqref{eqn:channel_structure2} into \eqref{eqn:initialstate} and taking partial trace over different qubits, one can obtain
the density matrices of the first and second qubits, i.e.,
\begin{align}
\begin{split}
\V{\rho}_1&=\mathrm{tr}_{2}\{\V{\rho}\}=\frac{1}{2}\mybmatrix{1mm}{2mm}{1}{1&0\\0&1}\\
\V{\rho}_2&=\mathrm{tr}_{1}\{\V{\rho}\}=\frac{1}{2}\begin{bmatrix}1+|\eta|^2p&\eta\zeta p\\\eta^\dag\zeta p&1-p+\zeta^2p\end{bmatrix}.
\end{split}\label{eqn:subden}
\end{align}
On the other hand, substituting \eqref{eqn:psi_Sch} and \eqref{eqn:phi_Sch} into \eqref{eqn:int_dense_I} and taking partial trace, one can obtain alternative expression of $\V{\rho}_1$ and $\V{\rho}_2$ in terms of $|x\rangle$, $x\in\{a,b,c,d\}$. This together with
\eqref{eqn:subden} give
\begin{align}
&F\big(\alpha^2|w\rangle\langle w| + \beta^2 |\tilde{w}\rangle\langle\tilde{w}|\big)+
(1-F)\big(\gamma^2|y\rangle\langle y| + \delta^2 |\tilde{y}\rangle\langle\tilde{y}|\big)\nonumber\\
&= \frac{1}{2}
\mybmatrix{1mm}{2mm}{1}{1&0\\0&1},
\label{eqn:psiphi_1}\\
&F\big(\alpha^2|x\rangle\langle x| + \beta^2 |\tilde{x}\rangle\langle\tilde{x}|\big)+
(1-F)\big(\gamma^2|z\rangle\langle z| + \delta^2 |\tilde{z}\rangle\langle\tilde{z}|\big)\nonumber\\
&=\frac{1}{2}\begin{bmatrix}1+|\eta|^2p&\eta\zeta p\\\eta^\dag\zeta p&1-p+\zeta^2p\end{bmatrix}.\label{eqn:psiphi_2}
\end{align}

We claim that $\gamma<\delta$ when $|\eta|>0$. 
If it were not the case, then $\gamma = \delta=\frac{1}{\sqrt{2}}$. Thus
\begin{align}\gamma^2|y\rangle\langle y| + \delta^2 |\tilde{y}\rangle\langle\tilde{y}|=\frac{1}{2}\mybmatrix{1mm}{2mm}{1}{1&0\\0&1}
\label{eqn:CI}
\end{align}
because  $\langle c|\tilde{c}\rangle = 0$.
Substitute \eqref{eqn:CI} into the left side of \eqref{eqn:psiphi_1}, one can get
\begin{align}
\alpha^2|w\rangle\langle w| + \beta^2 |\tilde{w}\rangle\langle\tilde{w}|=\frac{1}{2}\mybmatrix{1mm}{2mm}{1}{1&0\\0&1}.\label{eqn:AI}
\end{align}
Since  $\langle w|\tilde{w}\rangle = 0$, the left side of \eqref{eqn:AI} is a spectral decomposition of the right side, implying  $\alpha = \beta=\frac{1}{\sqrt{2}}$. Substitute $\alpha = \beta=\gamma = \delta=\frac{1}{\sqrt{2}}$ into the left side of \eqref{eqn:psiphi_2}, and since $\langle x|\tilde{x}\rangle = \langle z|\tilde{z}\rangle=0$, one can get
\begin{align}
\mybmatrix{1mm}{2mm}{1}
{
1&0\\
0&1
}
=
\mybmatrix{0.5mm}{1mm}{1}
{
1+|\eta|^2p&\eta\zeta p\\
\eta^\dag\zeta p&1-p+\zeta^2p
}
\end{align}
which holds only if $|\eta| = 0$. This contradicts with the fact that $|\eta|>0$ and thus proves the claim.

Next construct two unitary operators as
\begin{align}
\begin{split}
\M{U}_{\mathrm{A}}&=|0\rangle\langle w| + |1\rangle\langle \tilde{w}|\\
\M{U}_{\mathrm{B}}&=|0\rangle\langle x| + |1\rangle\langle \tilde{x}|.
\end{split}\label{eqn:UAB}
\end{align}
Substituting this into \eqref{eqn:int_dense_I}, it can be obtained that 
\begin{align}
\check{\V{\rho}}=F|\mu\rangle\langle\mu|
+ (1-F)|\nu \rangle\langle\nu|\label{eqn:checkrho}
\end{align}
in which
\begin{align}
|\mu\rangle &=\alpha|00\rangle + \beta |11\rangle\label{eqn:mu2}\\
|\nu\rangle &=\gamma|y_\mathrm{r}z_\mathrm{r}\rangle + \delta |\tilde{y}_\mathrm{r}\tilde{z}_\mathrm{r}\rangle\label{eqn:nu2}
\end{align}
where the ket notations with subscript ``$\mathrm{r}$" denote the rotated version of the original ones, e.g., 
$|y_\mathrm{r}\rangle=\M{U}_\mathrm{A}|y\rangle$, $|\tilde{z}_\mathrm{r}\rangle=\M{U}_\mathrm{B}|\tilde{z}\rangle$. Equation \eqref{eqn:checkrho} gives the structure of \eqref{eqn:int_dense_II}, and
\eqref{eqn:mu2} proves \eqref{eqn:mu}. 

The following analysis focuses on proving \eqref{eqn:nu}. 
Since $\M{U}_{\mathrm{A}}$, $\M{U}_{\mathrm{B}}$ are unitary,  \eqref{eqn:phipsiorth} implies $\langle\mu|\nu\rangle=0$, which gives
\begin{align}
&\alpha\gamma\langle00|y_\mathrm{r}z_\mathrm{r}\rangle+\alpha\delta\langle00|\tilde{y}_\mathrm{r}\tilde{z}_\mathrm{r}\rangle\nonumber\\
&+\beta\gamma\langle11|y_\mathrm{r}z_\mathrm{r}\rangle+\beta\delta\langle11|\tilde{y}_\mathrm{r}\tilde{z}_\mathrm{r}\rangle=0.\label{eqn:munuorth2}
\end{align}
Substituting $|w\rangle = \M{U}^\dag_{\mathrm{A}}|0\rangle$,  $|\tilde{w}\rangle = \M{U}^\dag_{\mathrm{A}}|1\rangle$, $|y\rangle = \M{U}^\dag_{\mathrm{A}}|y_\mathrm{r}\rangle$, and $|\tilde{y}\rangle = \M{U}^\dag_{\mathrm{A}}|\tilde{y}_\mathrm{r}\rangle$ into \eqref{eqn:psiphi_1}, one can get
\begin{align}
&F(\alpha^2|0\rangle\langle 0| + \beta^2 |1\rangle\langle1|)+
(1-F)(\gamma^2|y_\mathrm{r}\rangle\langle y_\mathrm{r}| + \delta^2 |\tilde{y}_\mathrm{r}\rangle\langle\tilde{y}_\mathrm{r}|)\nonumber
\\&=\frac{1}{2}(|0\rangle\langle0|+|1\rangle\langle1|).
\label{eqn:munu_I2}
\end{align}

Since $\M{U}_{\mathrm{A}}$, $\M{U}_{\mathrm{B}}$ are unitary and $\langle s|\tilde{s}\rangle = 0$, 
$\langle s_\mathrm{r}|\tilde{s}_\mathrm{r}\rangle = 0$, where $s\in\{y,z\}$.
Since $|0\rangle$, $|1\rangle$ and $|y_\mathrm{r}\rangle$, $|\tilde{y}_\mathrm{r}\rangle$ are two
sets of orthonormal basis for two-dimensional Hilbert space, there exists
$a, b\in\mathbb{C}$, $|a|^2 +|b|^2=1$, such that
\begin{align}
|y_\mathrm{r}\rangle=a|0\rangle +b|1\rangle, \quad\mbox{and}\quad |\tilde{y}_\mathrm{r}\rangle=-b^\dag|0\rangle
+a^\dag|1\rangle\label{eqn:c01}
\end{align}
Substitute \eqref{eqn:c01} into \eqref{eqn:munu_I2}, then
\begin{align}
&\Big(F\alpha^2+(1-F)(\gamma^2|a|^2+\delta^2|b|^2)-\frac{1}{2}\Big)|0\rangle\langle
0|\nonumber\\
&+ \Big(F\beta^2+(1-F)(\gamma^2|b|^2+\delta^2|a|^2)-\frac{1}{2}\Big)
|1\rangle\langle1|\nonumber\\
&+(1-F)(\gamma^2-\delta^2)ab^\dag|0\rangle\langle
1|\nonumber\\
&+(1-F)(\gamma^2-\delta^2)a^\dag b|1\rangle\langle
0| =\M{0}.\label{eqn:munu_I3}
\end{align}
Therefore
\begin{align}
F\alpha^2+(1-F)(\gamma^2|a|^2+\delta^2|b|^2)-\frac{1}{2}&=0\label{eqn:munu_I3_1}\\
F\beta^2+(1-F)(\gamma^2|b|^2+\delta^2|a|^2)-\frac{1}{2}&=0\label{eqn:munu_I3_2}\\
(1-F)(\gamma^2-\delta^2)ab^\dag&=0\label{eqn:munu_I3_3}\\
(1-F)(\gamma^2-\delta^2)a^\dag b&=0.\label{eqn:munu_I3_4}
\end{align}

Since  $F<1$ and $\gamma<\delta$, from \eqref{eqn:munu_I3_3} and \eqref{eqn:munu_I3_4},
one can get $a=0$ or $b =0$. Without loss of generality, let $b=0$,
then $|a|=1$. Therefore \eqref{eqn:c01} becomes 
\begin{align}
|y_\mathrm{r}\rangle=e^{\imath \theta_{\mathrm{a}}}|0\rangle, \qquad |\tilde{y}_\mathrm{r}\rangle=e^{-\imath \theta_{\mathrm{a}}}|1\rangle\label{eqn:yr}
\end{align}
where $\theta_{\mathrm{a}}=\mathrm{pha}\{a\}.$
Substituting \eqref{eqn:yr} into \eqref{eqn:munu_I3_1} and \eqref{eqn:munu_I3_2} gives
\begin{align}
F\alpha^2 + (1-F)\gamma^2 =  F\beta^2 + (1-F)\delta^2 =\frac{1}{2}.\label{eqn:munu_I}
\end{align}
Since $F>\frac{1}{2}$, substituting \eqref{eqn:alpha_normal} into \eqref{eqn:munu_I} shows that
\begin{align}
\gamma<\beta<\frac{\sqrt 2}{2}<\alpha<\delta.\label{eqn:gammadeltainequal}
\end{align}

Moreover, substituting \eqref{eqn:yr} into \eqref{eqn:munuorth2} gives
\begin{align*}
e^{\imath \theta_{\mathrm{a}}}\alpha\gamma\langle0|z_\mathrm{r}\rangle
+e^{-\imath \theta_{\mathrm{a}}}\beta\delta\langle1|\tilde{z}_\mathrm{r}\rangle&=0
\end{align*}
which implies
\begin{align}
|\alpha\gamma\langle0|z_\mathrm{r}\rangle|
=|\beta\delta\langle1|\tilde{z}_\mathrm{r}\rangle|.\label{eqn:munuorth3}
\end{align}
On the other hand,
since  $\langle z_\mathrm{r}|\tilde{z}_\mathrm{r}\rangle=0$, it can be verified that $|\langle0|z_\mathrm{r}\rangle|=|\langle1|\tilde{z}_\mathrm{r}\rangle|$.
Therefore from \eqref{eqn:gammadeltainequal},
$|\alpha\gamma\langle0|z_\mathrm{r}\rangle|\le|\beta\delta\langle1|\tilde{z}_\mathrm{r}\rangle|$,
with the equality holds only if  $|\langle0|z_\mathrm{r}\rangle|=|\langle1|\tilde{z}_\mathrm{r}\rangle|=0$. This result together with  \eqref{eqn:munuorth3} implies $|z_\mathrm{r}\rangle=
e^{\imath \theta_{z}}|1\rangle$, $|\tilde{z_\mathrm{r}}\rangle=e^{\imath \theta_{\tilde{z}}}|0\rangle$, for some $\theta_{z}, \theta_{\tilde{z}}\in[0,2\pi)$.
Further noting that a quantum state is invariant up to an overall phase change, one can get
\begin{align}
|\nu\rangle = \gamma|01\rangle + \delta e^{\imath\theta} |10\rangle,\label{eqn:nu_01form}
\end{align}
where $\theta = \theta_{\tilde{z}}-\theta_{z}-2\theta_{\mathrm{a}}$. Equation \eqref{eqn:nu_01form} proves \eqref{eqn:nu}.
Therefore the local unitary operators $\M{U}_{\mathrm{A}}$, $\M{U}_{\mathrm{B}}$ exhibited in \eqref{eqn:UAB} give \eqref{eqn:int_dense_II}--\eqref{eqn:nu}.

Finally, we show that $\alpha$, $\beta$, $\gamma$, and $\delta$ satisfy \eqref{eqn:alphabeta_pzeta} and \eqref{eqn:gammadelta_pzeta}.
Substutite \eqref{eqn:mu2} and \eqref{eqn:nu_01form} into \eqref{eqn:checkrho}, then the density matrix of the second qubit $\check{\V{\rho}}_2=\mathrm{tr}_1\{\check{\V{\rho}}\}$ becomes
\begin{align}
\check{\V{\rho}}_2=\;&(F\alpha^2+(1-F)\delta^2)|0\rangle\langle0| + \nonumber\\
&(F\beta^2+(1-F)\gamma^2)|1\rangle\langle1|.
\label{eqn:checkrho2}
\end{align}

Noting that unitary operations does not change the determinant of a matrix, $\det\{\check{\V{\rho}}_2\}=\det\{\V{\rho}_2\}$.
Therefore, from \eqref{eqn:subden} and \eqref{eqn:checkrho2} one can get
\begin{align}
&\big(F\alpha^2 +(1-F)\delta^2\big) \big(F\beta^2 +(1-F)\gamma^2\big)\nonumber
\\&=\frac{1}{4}\big((1+|\eta|^2p)(1-p+\zeta^2p)-|\eta\zeta p|^2 \big). \label{eqn:alpha_pF2}
\end{align}
Substituting \eqref{eqn:alpha_normal} and \eqref{eqn:munu_I} into \eqref{eqn:alpha_pF2}, one can get
\begin{align*}
\alpha = \sqrt{\frac{1}{2}+\frac{|\eta| p}{4F}},\qquad \delta  = \sqrt{\frac{1}{2}+\frac{|\eta| p}{4(1-F)}}.
\end{align*}
This completes the proof.

\section{Proof of \thref{lem:simplifybound}}
\label{pf_lem:simplifybound}
Equation \eqref{eqn:F_lebound} holds trivially when $\gamma=0$ since fidelity of any qubit pair cannot exceed 1, i.e., $f(F, \alpha, \beta, \gamma, \delta, \theta) \le 1$. 
Hence, it remains to consider the case for which $0<\gamma\le\delta<1$, which will be proved by contradiction. 
Suppose \thref{lem:simplifybound} is false, then $\forall F\in(\frac{1}{2},1]$,
\begin{align}
f(F,\frac{1}{\sqrt{2}},\frac{1}{\sqrt{2}},\frac{1}{\sqrt{2}},\frac{1}{\sqrt{2}},0)\le\frac{F^2}{F^2+(1-F)^2}\label{eqn:rho_violateF}
\end{align}
but there exists some $F_0$, $\alpha_0$, $\beta_0$, $\gamma_0$, $\delta_0$ and $\theta_0$ such that
\begin{align}
f(F_0,\alpha_0,\beta_0,\gamma_0,\delta_0,\theta_0)>
\frac{F_0^2}{F_0^2+(1-F_0)^2(\frac{\gamma_0\delta_0}{\alpha_0\beta_0})^2}.\label{eqn:rho_violateF0}
\end{align}
Then contradiction would arise if there exist some $\tilde{F}\in(\frac{1}{2},1]$ such that \eqref{eqn:rho_violateF} does not hold.
To show the the existence of such $\tilde{F}$, the \ac{RSSP} method is employed to transform 
a given density matrix with parameters $F=\tilde{F}$, $\alpha=\beta=\gamma=\delta=\frac{1}{\sqrt{2}}$, and $\theta=0$ to another density matrix with parameters $F_0$, $\alpha_0$, $\beta_0$, $\gamma_0$, $\delta_0$ and $\theta_0$ via \ac{LOCC}.
In particular, consider that Alice measures her qubit using local operators
\begin{align}
\V{M}_{\mathrm A}=\begin{bmatrix}\sqrt{\frac{\alpha_0\gamma_0}{\beta_0\delta_0}}&0\\0&e^{\imath\frac{\theta_0}{2}}\end{bmatrix}, \quad \V{M}_{\bar{\mathrm A}}=\begin{bmatrix}\sqrt{1-\frac{\alpha_0\gamma_0}{\beta_0\delta_0}}&0\\0&
0\end{bmatrix}\label{eqn:MA}
\end{align}
and Bob measures his qubit using local operators
\begin{align}
\V{M}_{\mathrm B}=\begin{bmatrix}e^{\imath\frac{\theta_0}{2}}&0\\0&\sqrt{\frac{\beta_0\gamma_0}{\alpha_0\delta_0}}\end{bmatrix}, \quad \V{M}_{\bar{\mathrm B}}=\begin{bmatrix}0&0\\0&
\sqrt{1-\frac{\beta_0\gamma_0}{\alpha_0\delta_0}}\end{bmatrix}.\label{eqn:MB}
\end{align}
When the measurement results correspond to $\V{M}_{\mathrm A}$ and $\V{M}_{\mathrm B}$, the density matrix of the qubit pair after the measurement is given by
\begin{align}
\breve{\V{\rho}}&=\frac{\big(\V{M}_{\mathrm A}\otimes\V{M}_{\mathrm B}\big)\,\check{\V{\rho}}\,\big(\V{M}_{\mathrm A}\otimes\V{M}_{\mathrm B}\big)^\dag}{\mathrm{tr}\big\{\big(\V{M}_{\mathrm A}\otimes\V{M}_{\mathrm B}\big)\,\check{\V{\rho}}\,\big(\V{M}_{\mathrm A}\otimes\V{M}_{\mathrm B}\big)^\dag\big\}},\label{eqn:rho_pretransform}
\end{align}
where $\check{\V{\rho}}$ is the density matrix given in \eqref{eqn:int_dense_II}.
Set the channel to be phase-damping, i.e. $\eta=0$, then $\alpha=\beta=\gamma = \delta=\frac{1}{\sqrt{2}}$, and $\theta=0$. 
Further set the channel parameter $p$ so that $F$ equals to $\tilde{F}$ given by\footnote{Equation \eqref{eqn:alphabeta_pzeta}--\eqref{eqn:gammadelta_pzeta} imply $0\le \gamma\le \beta \le\frac{1}{\sqrt{2}}\le \alpha \le \delta\le 1$ and $\alpha^2 + \beta^2 = \gamma^2 + \delta^2 = 1$, showing that $\frac{\gamma\delta}{\alpha\beta}\in[0,1]$ for all valid $\alpha$, $\beta$, $\gamma$, and $\delta$. Hence, $\tilde{F}$ in \eqref{eqn:setp} is in the interval $(\frac{1}{2},1]$ as long as $F_0\in(\frac{1}{2},1]$. This guarantees the existence of $p$.}
\begin{align}\tilde{F}=\frac{F_0}{F_0+(1-F_0)\frac{\gamma_0\delta_0}{\alpha_0\beta_0}}.\label{eqn:setp}
\end{align}
Then according to \eqref{eqn:rho_pretransform}, the \ac{LOCC} for \ac{RSSP} transforms a density matrix $\check{\V{\rho}}$ with parameters $F=\tilde{F}$, $\alpha=\beta=\gamma=\delta=\frac{1}{\sqrt{2}}$, and $\theta=0$ to another density matrix given by 
\begin{align}
\breve{\V{\rho}}=\;&F_0 (\alpha_0|0\rangle+\beta_0|11\rangle)(\alpha_0\langle00|+\beta_0e\langle11|)\nonumber
\\&+(1-F_0)(\gamma_0|01\rangle+\delta_0e^{\imath \theta_0}|10\rangle)(\gamma_0\langle01|+\delta_0e^{-\imath \theta_0}\langle10|)\label{eqn:rho_transform}
\end{align}
whose parameters are $F_0$, $\alpha_0$, $\beta_0$, $\gamma_0$, $\delta_0$ and $\theta_0$. Let $\{\V{N}^{(k)}_{\mathrm{A}},\V{N}^{(k)}_{\mathrm{B}}\}_{k=1}^K$ be local operators that achieve the optimal fidelity $f(F_0,\alpha_0,\beta_0,\gamma_0,\delta_0,\theta_0)$ with initial density matrix $\breve{\V{\rho}}$.
Define new local operators
\begin{align*}
\V{L}^{(k)}_{\mathrm{A}}=\V{N}^{(k)}_{\mathrm{A}}\V{M}_{\mathrm A},\qquad 
\V{L}^{(k)}_{\mathrm{B}}=\V{N}^{(k)}_{\mathrm{B}}\V{M}_{\mathrm B}.
\end{align*}
Then $\{\V{L}^{(k)}_{\mathrm{A}},\V{L}^{(k)}_{\mathrm{B}}\}_{k=1}^K$ are valid local operators, and achieve the same fidelity $f(F_0,\alpha_0,\beta_0,\gamma_0,\delta_0,\theta_0)$ with initial density matrix $\check{\V{\rho}}$. Therefore, the optimal fidelity with initial density matrix $\check{\V{\rho}}$ is lower bounded by
\begin{align}
f(\tilde{F},\frac{1}{\sqrt{2}},\frac{1}{\sqrt{2}},\frac{1}{\sqrt{2}},\frac{1}{\sqrt{2}},0)&\ge f(F_0,\alpha_0,\beta_0,\gamma_0,\delta_0,\theta_0).\label{eqn:rho_violate_pre}
\end{align}
This together with \eqref{eqn:rho_violateF0} gives
\begin{align}
f(F,\frac{1}{\sqrt{2}},\frac{1}{\sqrt{2}},\frac{1}{\sqrt{2}},\frac{1}{\sqrt{2}},0)&>
\frac{F_0^2}{F_0^2+(1-F_0)^2(\frac{\gamma_0\delta_0}{\alpha_0\beta_0})^2}\nonumber
\\&=\frac{\tilde{F}^2}{\tilde{F}^2+(1-\tilde{F})^2}.\label{eqn:rho_violate}
\end{align}

With \eqref{eqn:rho_violate} the contradiction arises. This completes the proof.

\section{Proof of \thref{lem:operation}}
\label{pf_lem:operation}
Without loss of generality, denote the seperable operator acting on two qubit pairs as $\V{N}_{\mathrm{A}}\otimes\V{N}_{\mathrm{B}}$, where $\V{N}_{\mathrm{A}}$, $\V{N}_{\mathrm{B}}$ are employed by Alice and Bob respectively.
Every operator $\V{N}$ for two qubits  can be written equivalently as \mbox{$\V{N}=\V{N}(\V{H}^\dag\otimes\V{H}^\dag)(\V{H}\otimes\V{H}$)}, where  $\V{H}=\frac{1}{\sqrt{2}}\begin{bmatrix}1&1\\1&-1\end{bmatrix}$ is the Hadamard operator. Therefore, denote  $\tilde{\V{N}}_{\mathrm{X}}=\V{N}_{\mathrm{X}}(\V{H}^\dag\otimes\V{H}^\dag)$, $X\in\{A,B\}$, then the separable operator $\V{N}_{\mathrm{A}}\otimes\V{N}_{\mathrm{B}}$ for two qubit pairs is equivalent to first perform $\V{H}$ on every qubit, and then perform $\tilde{\V{N}}_{\mathrm{A}}\otimes\tilde{\V{N}}_{\mathrm{B}}$.

From \eqref{eqn:int_dense_II}, 
$\check{\V{\rho}}=F|\Phi^+\rangle\langle\Phi^+| + (1-F)|\Psi^+\rangle\langle\Psi^+|$ when the channel is phase-damping.
Hence after performing Hadamard operation on the qubits, the density matrix of the qubit pair becomes
\begin{align}
\tilde{\V{\rho}}&=(\V{H}\otimes\V{H})\,\check{\V{\rho}}\,(\V{H}\otimes\V{H})^\dag\nonumber\\
&=F|\Phi^+\rangle\langle\Phi^+| + (1-F)|\Phi^-\rangle\langle\Phi^-|.
\label{eqn:int_dense_VI}
\end{align}
Therefore, the joint density matrix of two qubit pair, where the first and last two qubits belong to Alice Bob respectively, is given by
\begin{align}
\V{\rho}_{\mathrm{J}}&=\V{P}\,(\tilde{\V{\rho}}\otimes\tilde{\V{\rho}})\,\V{P}\nonumber\\
&=F^2|\Phi^{(1)}\rangle\langle\Phi^{(1)}| + F(1-F)(|\Phi^{(2)}\rangle\langle\Phi^{(2)}|
+ |\Phi^{(3)}\rangle\langle\Phi^{(3)}|)\nonumber
\\&\hspace{3.7mm}+ (1-F)^2|\Phi^{(4)}\rangle\langle\Phi^{(4)}|\label{eqn:initialjoint}
\end{align}
where $\V{P}$ is the permutation operator that switches the second and third qubits,
\begin{align}
\begin{split}
|\Phi^{(1)}\rangle&=\frac{1}{2}\big(|0000
\rangle+|0101\rangle+
|1010\rangle+
|1111\rangle\big)\\
|\Phi^{(2)}\rangle&=\frac{1}{2}\big(|0000
\rangle-|0101\rangle+
|1010\rangle-
|1111\rangle\big)\\
|\Phi^{(3)}\rangle&=\frac{1}{2}\big(|0000
\rangle+|0101\rangle-
|1010\rangle-
|1111\rangle\big)\\
|\Phi^{(4)}\rangle&=\frac{1}{2}\big(|0000
\rangle-|0101\rangle-
|1010\rangle+
|1111\rangle\big).
\end{split}\label{eqn:Phi1-4}
\end{align}
From \eqref{eqn:initialjoint},
after operator $\tilde{\V{N}}_{\mathrm{A}}\otimes\tilde{\V{N}}_B$ acts on the two qubit pairs,
the density matrix of the first qubit pair is given by
\begin{align}
\breve{\V{\rho}}&=\frac{\mathrm{tr}_{2,4}\big\{\sum_{i=1}^{4}C_i
\big(\tilde{\V{N}}_{\mathrm{A}}\otimes\tilde{\V{N}}_B\big)\,|\Phi^{(i)}\rangle\langle\Phi^{(i)}|\,
\big(\tilde{\V{N}}_{\mathrm{A}}\otimes\tilde{\V{N}}_B\big)^\dag\big\}}
{\mathrm{tr}\big\{\sum_{i=1}^{4}C_i
\big(\tilde{\V{N}}_{\mathrm{A}}\otimes\tilde{\V{N}}_B\big)\,|\Phi^{(i)}\rangle\langle\Phi^{(i)}|\,
\big(\tilde{\V{N}}_{\mathrm{A}}\otimes\tilde{\V{N}}_B\big)^\dag\big\}}\nonumber\\
&=\frac{\sum_{i=1}^{4}C_i\mathrm{tr}_{2,4}\big\{
\pmb{\phi}^{(i)}\pmb{\phi}^{(i)\dag}\big\}}
{\mathrm{tr}\big\{\sum_{i=1}^{4}C_i\mathrm{tr}_{2,4}\big\{\pmb{\phi}^{(i)\dag}
\pmb{\phi}^{(i)}\big\}\big\}}\label{eqn:rho2_s1}
\end{align}
where $C_1=F^2$, $C_2=C_3=F(1-F)$, $C_4=(1-F)^2$, and $\pmb{\phi}^{(i)}=\big(\tilde{\V{N}}_{\mathrm{A}}\otimes\tilde{\V{N}}_B\big)|\Phi^{(i)}\rangle$.
Denote $|w\rangle=|00\rangle$, $|x\rangle=|01\rangle$,
$|y\rangle=|10\rangle$, and $|z\rangle=|11\rangle$, and
denote
\begin{align*}
\pmb{\psi}^{(i)}=\mathrm{tr}_{2,4}\big\{
\pmb{\phi}^{(i)}\big\}.
\end{align*}
Then
\begin{align}
\pmb{\psi}^{(i)}
&=\mathrm{tr}_{2,4}\big\{\big(\tilde{\V{N}}_{\mathrm{A}}\otimes\tilde{\V{N}}_B\big)\,|\Phi^{(i)}\rangle\}\nonumber\\
&=\big(\sum_{k=0}^1\big(\mathbb{I}_2\otimes\langle k|\big)\tilde{\V{N}}_{\mathrm{A}}\otimes\sum_{j=0}^1\big(\mathbb{I}_2\otimes\langle j|\big)\tilde{\V{N}}_B\big)\nonumber
\\&\hspace{5mm}\qquad\begin{bmatrix}|ww\rangle&
|xx\rangle&
|yy\rangle&
|zz\rangle
\end{bmatrix}\V{v}^{(i)}\nonumber\\
&=\begin{bmatrix}
\V{w}&
\V{x}&
\V{y}&
\V{z}
\end{bmatrix}\V{v}^{(i)}\label{eqn:Phi_2}
\end{align}
where $\V{v}^{(i)}$ is the $i$-th column of the unitary matrix $\V{V}$ defined in \eqref{eqn:U} and
\begin{align}
\V{s}&=\big(\sum_{k=0}^1\big(\mathbb{I}_2\otimes\langle k|\big)\tilde{\V{N}}_{\mathrm{A}}\otimes\sum_{j=0}^1\big(\mathbb{I}_2\otimes\langle j|\big)\tilde{\V{N}}_B\big)|ss\rangle\nonumber\\
&=\begin{bmatrix}
s_{11}\\
s_{12}\\
s_{21}\\
s_{22}
\end{bmatrix}\label{eqn:phi_2}
\end{align}
$s\in\{w,x,y,z\}$. 
Combining  \eqref{eqn:rho2_s1}
and \eqref{eqn:Phi_2} gives \eqref{eqn:rho2} and \eqref{eqn:phi_3}. 

In \eqref{eqn:phi_2},
$|ss\rangle$ is a separable state, and 
\begin{align*}
\sum_{k=0}^1\big(\mathbb{I}_2\otimes\langle k|\big)\tilde{\V{N}}_{\mathrm{A}}\otimes\sum_{j=0}^1\big(\mathbb{I}_2\otimes\langle j|\big)\tilde{\V{N}}_B
\end{align*}
is a separable operator.
Therefore, vectors $\V{s}$, ${s}\in\{{w},{x},{y},{z}\}$ must also be separable.
As $1\times 4$ vectors, $\V{s}$ are separable if and only if
\begin{align*}
s_{11}s_{22}=s_{12}s_{21}, \qquad\forall s\in\{w,x,y,z\}
\end{align*} 
which give \eqref{eqn:srelation}. 

Finally, the probability that $\V{N}_{\mathrm{A}}\otimes\V{N}_{\mathrm{B}}$ acts on the qubits must not be $0$, which implies 
\begin{align*}\sum_{i=1}^{4}C_i\pmb{\psi}^{(i)\dag}\pmb{\psi}^{(i)}>0.
\end{align*} This completes the proof.

\section{Proof of \thref{thm:UBF}}
\label{pf_thm:UBF}
We will first prove that the fidelity $F^*$ given in \eqref{eqn:Fidelity_Upper} is an upper bound.
From \thref{lem:simplifybound}, it is sufficient to prove the upper bound for the special case when the channel is phase-damping. 

Express the \ac{LOCC} performed by the agents as
$\V{N}^{(k)}_{\mathrm{A}}\otimes\V{N}^{(k)}_{\mathrm{B}}$, $k\in\{1,2,\ldots,K\}$.
Without loss of generality, assume $\V{N}^{(1)}_{\mathrm{A}}\otimes\V{N}^{(1)}_{\mathrm{B}}$ is one of the operators that lead to the highest fidelity.
Then from \thref{lem:operation}, conditioned on the event that 
$\V{N}^{(1)}_{\mathrm{A}}
\otimes\V{N}^{(1)}_{\mathrm{B}}$ acts on the two qubit pairs, the fidelity of the kept qubit pair is given by
\begin{align*}
&\langle\Phi^+|\breve{\V{\rho}}|\Phi^+\rangle\nonumber\\
&=
\frac{\sum_{i=1}^{4}C_i\langle\Phi^+|\,\pmb{\psi}^{(i)}\pmb{\psi}^{(i)\dag}|\Phi^+\rangle}
{\sum_{i=1}^{4}C_i\pmb{\psi}^{(i)\dag}\pmb{\psi}^{(i)}}\nonumber\\
&=\frac{\frac{1}{2}\sum_{i=1}^{4}C_i\Big|\sum_{k=1}^2\begin{bmatrix}w_{kk}
& x_{kk} & y_{kk} & z_{kk}\end{bmatrix}\V{v}^{(i)}
\Big|^2}
{\sum_{i=1}^{4}C_i\sum_{k=1}^2\sum_{j=1}^2\Big|\begin{bmatrix}w_{kj}
& x_{kj}& y_{kj} & z_{kj}\end{bmatrix}\V{v}^{(i)}
\Big|^2}
\end{align*}
where $C_i$, $\V{v}^{(i)}$, $i\in\{1,2,3,4\}$ and $s_{kj}$, $s\in\{w,x,y,z\}$, $k,j\in\{1,2\}$ are defined in \thref{lem:operation}.
Note that from \thref{thm:Str_rho}, $F\in[\frac{1}{2},1]$, implying that $C_1\ge C_2 = C_3 \ge C_4 \ge 0$.
Therefore, to prove the upper bound part of
\thref{thm:UBF}, it is sufficient to show that the following proposition
is true.

\begin{Prop}[Maximum fidelity]\thlabel{cor:FUB_1}
For any $s_{kj} \in\mathbb{C}$, $s\in\{w,x,y,z\}$, $k,j\in\{1,2\}$,
and $1\ge C_1\ge C_2\ge C_3 \ge C_4 \ge 0$, satisfying $s_{11}s_{22}=s_{12}s_{21}$ and 
\begin{align*}\sum_{i=1}^{4}C_i\sum_{k=1}^2\sum_{j=1}^2\Big|\begin{bmatrix}w_{kj}& x_{kj}& y_{kj} & z_{kj}\end{bmatrix}\V{v}^{(i)}\Big|^2>0\end{align*}
the following inequality holds
\begin{align}
& \frac{\sum_{i=1}^{4}C_i\Big|\sum_{k=1}^2
\begin{bmatrix}w_{kk}& x_{kk}
& y_{kk} & z_{kk}\end{bmatrix}\V{v}^{(i)} \Big|^2}
{\sum_{i=1}^{4}C_i\sum_{k=1}^2\sum_{j=1}^{2}\Big|\begin{bmatrix}w_{kj}&
x_{kj} & y_{kj} & z_{kj}\end{bmatrix}\V{v}^{(i)}\Big|^2 }\nonumber
\\&\le\frac{2C_1}{C_1+C_4}.
\label{eqn:Fidelity_3}
\end{align}
\end{Prop}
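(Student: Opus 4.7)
The plan is to recast \eqref{eqn:Fidelity_3} as an inequality between quadratic forms in matrix-valued variables, after which a single Cauchy--Schwarz estimate, powered by the rank-1 hypothesis, produces the tight bound.

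First, reshape each four-vector $\V{s}=(s_{11},s_{12},s_{21},s_{22})^T$ into the $2\times 2$ matrix $\M{T}_s$ with $(\M{T}_s)_{kj}=s_{kj}$; the hypothesis $s_{11}s_{22}=s_{12}s_{21}$ is simply $\det\{\M{T}_s\}=0$. Reshape $\pmb{\psi}^{(i)}$ likewise into the $2\times 2$ matrix $\M{G}_i$, so that $\M{G}_i=\sum_s V_{s,i}\M{T}_s$. One checks that \eqref{eqn:Fidelity_3} is equivalent to
\begin{align*}
\sum_i C_i\big|\mathrm{tr}\{\M{G}_i\}\big|^2\le\frac{2C_1}{C_1+C_4}\sum_i C_i\|\M{G}_i\|_F^2,
\end{align*}
where $\|\cdot\|_F$ denotes the Frobenius norm.

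Next, decompose each $2\times 2$ matrix into scalar-times-identity and traceless parts: $\M{G}_i=\tfrac{\tau_i}{2}\mathbb{I}_2+\M{R}_i$ and $\M{T}_s=\tfrac{t_s}{2}\mathbb{I}_2+\M{S}_s$, where $\tau_i=\mathrm{tr}\{\M{G}_i\}$, $t_s=\mathrm{tr}\{\M{T}_s\}$, and $\M{R}_i$, $\M{S}_s$ are traceless. The decomposition is orthogonal in the Hilbert--Schmidt inner product, so $\|\M{G}_i\|_F^2=\tfrac{|\tau_i|^2}{2}+\|\M{R}_i\|_F^2$. Since $\V{V}$ is real orthogonal and $\tau_i=\sum_s V_{s,i}t_s$, $\M{R}_i=\sum_s V_{s,i}\M{S}_s$, the Parseval identities give $\sum_i|\tau_i|^2=\sum_s|t_s|^2$ and $\sum_i\|\M{R}_i\|_F^2=\sum_s\|\M{S}_s\|_F^2$. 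A direct $2\times 2$ computation yields $\det\{\M{T}_s\}=\tfrac{t_s^2}{4}-\tfrac{1}{2}\mathrm{tr}\{\M{S}_s^2\}$, so the rank-1 constraint $\det\{\M{T}_s\}=0$ is equivalent to $t_s^2=2\mathrm{tr}\{\M{S}_s^2\}$. Cauchy--Schwarz for the Hilbert--Schmidt inner product then gives $|\mathrm{tr}\{\M{S}_s^2\}|\le\|\M{S}_s\|_F^2$, so $|t_s|^2\le 2\|\M{S}_s\|_F^2$. Summing over $s$ and using Parseval produces the pivotal bound $\sum_i|\tau_i|^2\le 2\sum_i\|\M{R}_i\|_F^2$.

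Finally, elementary algebra converts the target into $C_4\sum_i C_i|\tau_i|^2\le 2C_1\sum_i C_i\|\M{R}_i\|_F^2$. Bounding $C_i\le C_1$ on the left, applying the pivotal bound, and using $C_4\le C_i$ on the right,
\begin{align*}
C_4\sum_i C_i|\tau_i|^2\le C_1C_4\sum_i|\tau_i|^2\le 2C_1C_4\sum_i\|\M{R}_i\|_F^2\le 2C_1\sum_i C_i\|\M{R}_i\|_F^2,
\end{align*}
with the boundary case $C_4=0$ immediate. The main obstacle is spotting that the traceless decomposition is the right coordinate change: once adopted, the rank-1 hypothesis collapses to the clean identity $t_s^2=2\mathrm{tr}\{\M{S}_s^2\}$, whose Cauchy--Schwarz bound supplies exactly the slack needed to sharpen the naive factor of $2$ (which would come from $|\mathrm{tr}\{\M{G}_i\}|^2\le 2\|\M{G}_i\|_F^2$ alone) down to the tight $\tfrac{2C_1}{C_1+C_4}$.
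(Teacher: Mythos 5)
Your proof is correct, but it reaches \eqref{eqn:Fidelity_3} by a genuinely different route than the paper. The paper first disposes of the unequal weights with \thref{lem:C}: treating $C_4$, then $C_3$, then $C_2$ in turn as a variable $t$ and exploiting the convexity of an associated quadratic, it reduces the claim to the case $C_1=C_2=C_3=C_4$, where the bound becomes $1$; it then proves that case by expanding $\sum_i|\cdot\,\V{v}^{(i)}|^2$ via unitarity of $\V{V}$ and using $|s_{12}|^2+|s_{21}|^2\ge 2|s_{12}||s_{21}|=2|s_{11}||s_{22}|$ together with $|s_{11}+s_{22}|^2\le |s_{11}|^2+|s_{22}|^2+2|s_{11}||s_{22}|$ (this is \eqref{eqn:Fidelity_5}). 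Your pivotal bound $\sum_i|\tau_i|^2\le 2\sum_i\|\M{R}_i\|_F^2$ is exactly that equal-weight inequality rewritten in trace/traceless coordinates (since $\|\M{G}_i\|_F^2=\tfrac{1}{2}|\tau_i|^2+\|\M{R}_i\|_F^2$), and your derivation of it --- $\det\{\M{T}_s\}=0$ being equivalent to $t_s^2=2\,\mathrm{tr}\{\M{S}_s^2\}$ plus Cauchy--Schwarz --- is a tidy repackaging of the same entrywise estimates. Where you genuinely diverge is in handling the weights: after clearing denominators to $C_4\sum_iC_i|\tau_i|^2\le 2C_1\sum_iC_i\|\M{R}_i\|_F^2$, the sandwich $C_4\le C_i\le C_1$ finishes in one line, replacing the paper's three successive applications of the convexity lemma; this is shorter and, incidentally, uses only $C_4\le C_i\le C_1$ rather than the full monotone ordering. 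The single point worth making explicit is that the positivity hypothesis forces $C_1>0$ (otherwise every $C_i=0$ and the denominator of \eqref{eqn:Fidelity_3} vanishes), so the division by $C_1+C_4$ in your reformulation is legitimate.
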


To prove the proposition above, first simplify  \eqref{eqn:Fidelity_3} via the following lemma.
\begin{Lem}[Simplify Parameters]\thlabel{lem:C} Consider coefficients $r_1, r_2,\check{r}_2,r_3,\check{r}_3,r_4\ge 0$, $\check{r}_4>0$ and variable $t\in[0,1]$, satisfying $r_3t + r_4>0$ and
\begin{align}
\qquad r_2\check{r}_4-\check{r}_2r_4\le0.\label{eqn:r}
\end{align}
If inequality
\begin{align}
\frac{r_1t + r_2}{r_3t + r_4}\le \frac{\check{r}_2}{\check{r}_3t + \check{r}_4}\label{eqn:FC_1}
\end{align}
holds for $t=\check{t}\ge0$, then it holds for all $t\in[0,\check{t}]$.
\end{Lem}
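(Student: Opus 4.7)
The plan is to reduce the rational inequality to a polynomial one and then exploit convexity. First I would note that $r_3 t + r_4 > 0$ is given and $\check{r}_3 t + \check{r}_4 \ge \check{r}_4 > 0$ for $t \in [0,\check{t}]$ since $\check{r}_3 \ge 0$, so both denominators are strictly positive on the whole interval. Cross-multiplying, \eqref{eqn:FC_1} is then equivalent to
\begin{align*}
h(t) := r_1\check{r}_3\, t^2 + (r_1\check{r}_4 + r_2\check{r}_3 - \check{r}_2 r_3)\, t + (r_2\check{r}_4 - \check{r}_2 r_4) \le 0.
\end{align*}

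Next I would verify the inequality at the two endpoints of the interval $[0,\check{t}]$. At $t = 0$, $h(0) = r_2\check{r}_4 - \check{r}_2 r_4 \le 0$, which is exactly the assumption \eqref{eqn:r}. At $t = \check{t}$, $h(\check{t}) \le 0$ follows directly from the hypothesis that \eqref{eqn:FC_1} holds at $t = \check{t}$ (combined with positivity of the denominators).

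The final step is to propagate the endpoint bounds to the whole interval. Since $r_1, \check{r}_3 \ge 0$, the leading coefficient $r_1\check{r}_3 \ge 0$, so $h$ is a convex function of $t$ (a convex quadratic when $r_1\check{r}_3>0$, and an affine function when $r_1\check{r}_3=0$). A convex function on a closed interval attains its maximum at an endpoint, so for every $t \in [0,\check{t}]$,
\begin{align*}
h(t) \le \max\{h(0),\, h(\check{t})\} \le 0,
\end{align*}
which establishes \eqref{eqn:FC_1} on the entire interval.

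There is no real obstacle here; the only point requiring care is to ensure the denominators stay positive throughout $[0,\check{t}]$ so that cross-multiplication preserves the direction of the inequality, and to handle the degenerate case $r_1\check{r}_3 = 0$ (where $h$ is affine rather than strictly convex) uniformly with the generic case by using only convexity rather than strict convexity.
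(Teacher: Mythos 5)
Your proof is correct and follows essentially the same route as the paper: cross-multiply (using positivity of both denominators on $[0,\check{t}]$) to obtain the quadratic $h(t)=r_1\check{r}_3 t^2+(r_1\check{r}_4+r_2\check{r}_3-\check{r}_2 r_3)t+(r_2\check{r}_4-\check{r}_2 r_4)$, verify $h(0)\le 0$ from \eqref{eqn:r} and $h(\check{t})\le 0$ from the hypothesis, and conclude by convexity of $h$ (nonnegative leading coefficient) that $h\le 0$ on the whole interval. Your explicit remark that the affine case $r_1\check{r}_3=0$ is covered uniformly by the convexity argument is a small point of added care over the paper's wording, but the argument is the same.
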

\begin{proof}
Define function 
\begin{align*}f(t)\triangleq r_1\check{r}_3 t^2 + (r_2\check{r}_3+ r_1\check{r}_4 - \check{r}_2r_3)t
+ r_2\check{r}_4 - \check{r}_2r_4.
\end{align*}
From \eqref{eqn:r}, $f(0)\le 0$.
Since $r_3t + r_4>0$ and $\check{r}_3t + \check{r}_4 > 0$, the fact that \eqref{eqn:FC_1} holds for $t=\check{t}$ is equivalent to  $f(\check{C})\le0$.
Moreover, since $f''(t)=r_1\check{r}_3\ge 0$, $f(t)$ is a convex function.
Therefore, $f(t)\le 0$, $\forall t\in[0,\check{t}]$, which is equivalent to \eqref{eqn:FC_1} holds $\forall t\in[0,\check{t}]$. This completes the proof
of \thref{lem:C}.
\end{proof}

Letting $C_4 = t$,
\begin{align*}
\Big|\sum_{k=1}^2\begin{bmatrix}w_{kk}& x_{kk}
& y_{kk} & z_{kk}\end{bmatrix}\V{v}^{(4)} \Big|^2  &= r_1\\
\sum_{i=1}^{3}C_i\Big|\sum_{k=1}^2\begin{bmatrix}w_{kk}& x_{kk}
& y_{kk} & z_{kk}\end{bmatrix}\V{v}^{(i)} \Big|^2 &= r_2\\
\sum_{k=1}^2\sum_{j=1}^{2}\Big|\begin{bmatrix}w_{kj}&
x_{kj} & y_{kj} & z_{kj}\end{bmatrix}\V{v}^{(4)}\Big|^2 &= r_3 \\
\sum_{i=1}^{3}C_i\sum_{k=1}^2\sum_{j=1}^{2}\Big|\begin{bmatrix}w_{kj}&
x_{kj} & y_{kj} & z_{kj}\end{bmatrix}\V{v}^{(i)}\Big|^2 &= r_4
\end{align*}
$2C_1= \check{r}_2$, $1=\check{r}_3$, and $C_1 = \check{r}_4$
in \eqref{eqn:Fidelity_3} gives the form of \eqref{eqn:FC_1}.
It can be verified that
\begin{align*}
&r_2\check{r}_4 - \check{r}_2r_4\nonumber\\
&=
C_1\sum_{i=1}^{3}C_i\bigg(\Big|\sum_{k=1}^2\begin{bmatrix}w_{kk}& x_{kk}
& y_{kk} & z_{kk}\end{bmatrix}\V{v}^{(i)} \Big|^2\nonumber\\
&\hspace{3.7mm}-2\sum_{k=1}^2\sum_{j=1}^2\Big|\begin{bmatrix}w_{kj}&
x_{kj} & y_{kj} & z_{kj}\end{bmatrix}\V{v}^{(i)}\Big|^2\bigg)\nonumber\\
&\le2C_1\sum_{i=1}^{3}C_i\bigg(\sum_{k=1}^2\Big|\begin{bmatrix}w_{kk}&
x_{kk} & y_{kk} & z_{kk}\end{bmatrix}\V{v}^{(i)} \Big|^2
\nonumber\\
&\hspace{3.7mm}-\sum_{k=1}^2\Big|\begin{bmatrix}w_{kk}&
x_{kk} & y_{kk} & z_{kk}\end{bmatrix}\V{v}^{(i)} \Big|^2\bigg)\nonumber\\
&=0. 
\end{align*}
Therefore, \thref{lem:C} shows that \eqref{eqn:FC_1} is true $\forall t\in[0,C_3]$ if it is true for $t=C_3$. This implies that to prove \thref{cor:FUB_1}, it is sufficient to prove \eqref{eqn:Fidelity_3} for the case of $C_4=C_3$.
Repeating this process two more times, i.e., appling \thref{lem:C} to \eqref{eqn:Fidelity_3} with $C_3=t$, and then with $C_2=t$, it can be shown that considering the case in which $C_1=C_2=C_3=C_4$ is sufficient to prove the proposition. Then, \eqref{eqn:Fidelity_3} simplifies to
\begin{align}
&\frac{\sum_{i=1}^{4}\left|\sum_{k=1}^2\begin{bmatrix}w_{kk}& x_{kk} & y_{kk}
& z_{kk}\end{bmatrix}\V{v}^{(i)} \right|^2}
{\sum_{k=1}^2\sum_{j=1}^{2}\sum_{i=1}^{4}\left|\begin{bmatrix}w_{kj}& x_{kj}
& y_{kj} & z_{kj}\end{bmatrix}\V{v}^{(i)}\right|^2 }\le1
\label{eqn:Fidelity_4}
\end{align}

Note that
\begin{align}
&\sum_{i=1}^{4}\left|\begin{bmatrix}w_{kj}& x_{kj} & y_{kj} & z_{kj}\end{bmatrix}\V{v}^{(i)}\right|^2\nonumber\\
&=\sum_{i=1}^{4}\begin{bmatrix}w_{kj}& x_{kj} & y_{kj} & z_{kj}\end{bmatrix}\V{v}^{(i)}\V{v}^{(i)\dag}\begin{bmatrix}w_{kj}& x_{kj} & y_{kj} & z_{kj}\end{bmatrix}^\dag \nonumber\\
&=\begin{bmatrix}w_{kj}& x_{kj} & y_{kj} & z_{kj}\end{bmatrix}\V{V}\V{V}^\dag\begin{bmatrix}w_{kj}& x_{kj} & y_{kj} & z_{kj}\end{bmatrix}^\dag \nonumber\\
&=\hspace{-2mm}\sum_{s\in\{w,x,y,z\}}\hspace{-2mm}|s_{kj}|^2\label{eqn:normrelation1}
\end{align}
where the last equality is due to the fact that $\V{V}$ is unitary.
Similarly,
\begin{align}
\sum_{i=1}^{4}\Big|\sum_{k=1}^2\begin{bmatrix}w_{kk}& x_{kk} & y_{kk} &
z_{kk}\end{bmatrix}\V{v}^{(i)} \Big|^2 = \hspace{-2mm}\sum_{s\in\{w,x,y,z\}}\hspace{-2mm}|s_{11}+s_{22}|^2.\label{eqn:normrelation2}
\end{align}
Then it can be obtained that
\begin{align}
&\sum_{k=1}^2\sum_{j=1}^{2}\sum_{i=1}^{4}\left|\begin{bmatrix}w_{kj}& x_{kj}
& y_{kj} & z_{kj}\end{bmatrix}\V{v}^{(i)}\right|^2\nonumber\\
&\stackrel{(a)}{=}  \hspace{-2mm}\sum_{s\in\{w,x,y,z\}}\hspace{-2mm}(|s_{11}|^2 + |s_{22}|^2) + \hspace{-2mm}\sum_{s\in\{w,x,y,z\}}\hspace{-2mm}(|s_{12}|^2
+ |s_{21}|^2)\nonumber\\
&\stackrel{(b)}{\ge}  \hspace{-2mm}\sum_{s\in\{w,x,y,z\}}\hspace{-2mm}(|s_{11}|^2 + |s_{22}|^2) + 2\hspace{-2mm}\sum_{s\in\{w,x,y,z\}}\hspace{-2mm}|s_{11}||s_{22}|
\nonumber\\
&\stackrel{(c)}{=}  \sum_{i=1}^{4}\Big|\sum_{k=1}^2\begin{bmatrix}w_{kk}& x_{kk} & y_{kk}
& z_{kk}\end{bmatrix}\V{v}^{(i)} \Big|^2\label{eqn:Fidelity_5}
\end{align}
where (a), (b), and (c) are due to \eqref{eqn:normrelation1}, \eqref{eqn:srelation}, and \eqref{eqn:normrelation2}, respectively.
This inequaltiy shows that \eqref{eqn:Fidelity_4} is true, which then proves \thref{cor:FUB_1}. This proofs that the fidelity $F^*$ given in \eqref{eqn:Fidelity_Upper} is an upper bound.

Finally, we use constructive method to show that fidelity $F^*$ given in \eqref{eqn:Fidelity_Upper} is achievable.
In fact, \eqref{eqn:F_update_FP1} of \thref{thm:dis_perf} shows that the fidelity in \eqref{eqn:Fidelity_Upper} is achieved by
adopting the \ac{RSSP} and first round distillation of the algorithm proposed in Section~\ref{subsec:achfid} and keeps a qubit pair only if measurement results correspond to $|1\rangle\langle1|$. This completes the proof.

\section{Proof of \thref{thm:dis_perf}}
\label{pf_thm:dis_perf}
First, the following lemma summarizes the effect of \ac{RSSP}.

\begin{Lem}[Performance of \ac{RSSP}]\thlabel{lem:single} In process of \ac{RSSP}, qubit pairs are kept with probability
\begin{align}
P_{\mathrm s}=2F_0\beta^2+(1-F_0)(\gamma^2+\frac{\beta^2\delta^2}{\alpha^2}).\label{eqn:prob_after1bit}
\end{align}
For a kept qubit pair, its density matrix is given by
\begin{align}
\tilde{\V{\rho}}= \tilde{F}|\Phi^+\rangle\langle\Phi^+| + (1-\tilde{F})|\tilde{\nu}\rangle\langle\tilde{\nu}|
\label{eqn:dense_after1bit}
\end{align}
where
\begin{align}
|\tilde{\nu}\rangle &= \tilde{\gamma}|01\rangle + \tilde{\delta}e^{\imath \theta} |10\rangle\label{eqn:nuhat}
\end{align}
with
\begin{align}
\tilde{F} &= \frac{2F_0\alpha^2\beta^2}{2F_0\alpha^2\beta^2+(1-F_0)(\alpha^2\gamma^2+\beta^2\delta^2)}\label{eqn:Fhat_pzeta}\\
\tilde{\gamma} &=\frac{\alpha\gamma}{\sqrt{\alpha^2\gamma^2+\beta^2\delta^2}}\label{eqn:gammahat_pzeta}\\
\tilde{\delta}  &=\frac{\beta\delta}{\sqrt{\alpha^2\gamma^2+\beta^2\delta^2}}.\label{eqn:deltahat_pzeta}
\end{align}
\end{Lem}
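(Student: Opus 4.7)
The plan is a direct calculation: apply $\mathbb{I}_2\otimes\V{M}_{\mathrm{B}}$ to the spectral decomposition of $\check{\V{\rho}}$ provided by \thref{thm:Str_rho}, read off the success probability from the trace, and normalize the surviving state.

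First, using $\kappa=\beta/\alpha$ together with $\V{M}_{\mathrm{B}}|0\rangle=\kappa|0\rangle$ and $\V{M}_{\mathrm{B}}|1\rangle=|1\rangle$, a direct computation on the two eigenvectors of $\check{\V{\rho}}$ gives
\begin{align*}
(\mathbb{I}_2\otimes\V{M}_{\mathrm{B}})|\mu\rangle &= \beta|00\rangle+\beta|11\rangle = \sqrt{2}\,\beta\,|\Phi^+\rangle,\\
(\mathbb{I}_2\otimes\V{M}_{\mathrm{B}})|\nu\rangle &= \gamma|01\rangle + \frac{\beta\delta}{\alpha}e^{\imath\theta}|10\rangle.
\end{align*}
The first identity exhibits the key role of $\V{M}_{\mathrm{B}}$: it rescales Bob's $|0\rangle$-amplitude to balance the Schmidt coefficients of $|\mu\rangle$, producing exactly the target vector $|\Phi^+\rangle$.

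Next, I would form the unnormalized post-measurement operator $\V{\sigma}=(\mathbb{I}_2\otimes\V{M}_{\mathrm{B}})\check{\V{\rho}}(\mathbb{I}_2\otimes\V{M}_{\mathrm{B}})^\dag$. Because $\check{\V{\rho}}$ has no cross terms between $|\mu\rangle$ and $|\nu\rangle$, neither does $\V{\sigma}$, so
\begin{align*}
\V{\sigma}=2F_0\beta^2|\Phi^+\rangle\langle\Phi^+|+(1-F_0)\,|\nu'\rangle\langle\nu'|,
\end{align*}
with $|\nu'\rangle=\gamma|01\rangle+\frac{\beta\delta}{\alpha}e^{\imath\theta}|10\rangle$. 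The probability of the outcome corresponding to $\V{M}_{\mathrm{B}}$ is then $P_{\mathrm{s}}=\mathrm{tr}\{\V{\sigma}\}=2F_0\beta^2+(1-F_0)(\gamma^2+\beta^2\delta^2/\alpha^2)$, which is exactly \eqref{eqn:prob_after1bit}.

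Finally, dividing by $P_{\mathrm{s}}$ yields $\tilde{\V{\rho}}=\tilde{F}|\Phi^+\rangle\langle\Phi^+|+(1-\tilde{F})|\tilde{\nu}\rangle\langle\tilde{\nu}|$, where $\tilde{F}=2F_0\beta^2/P_{\mathrm{s}}$ matches \eqref{eqn:Fhat_pzeta} after multiplying numerator and denominator by $\alpha^2$, and $|\tilde{\nu}\rangle$ is obtained by normalizing $|\nu'\rangle$, whose norm is $\sqrt{\alpha^2\gamma^2+\beta^2\delta^2}/\alpha$; the resulting coefficients give precisely $\tilde{\gamma}$ and $\tilde{\delta}$ in \eqref{eqn:gammahat_pzeta}--\eqref{eqn:deltahat_pzeta}. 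Since $|\Phi^+\rangle\in\mathrm{span}(|00\rangle,|11\rangle)$ and $|\tilde{\nu}\rangle\in\mathrm{span}(|01\rangle,|10\rangle)$ are automatically orthogonal, the spectral form \eqref{eqn:dense_after1bit} is preserved without any cross terms. There is no conceptual obstacle here; the only care required is bookkeeping of the $\alpha,\beta$ factors that appear in the normalization.
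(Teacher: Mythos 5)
Your proposal is correct and follows essentially the same route as the paper, which simply applies $\mathbb{I}_2\otimes\V{M}_{\mathrm{B}}$ to $\check{\V{\rho}}$, takes the trace for the probability, and normalizes, omitting the algebra "for brevity." You have filled in exactly that omitted algebra, and the key observations (that $\V{M}_{\mathrm{B}}$ maps $|\mu\rangle$ to a multiple of $|\Phi^+\rangle$ and that no cross terms arise) are all accurate.
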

\begin{proof}
The qubit pairs are kept with probability
\begin{align}
\mathrm{tr}\{(\mathbb{I}_{2}\otimes\V{M}_{\mathrm{B}})  \,\check{\V{\rho}}\,  (\mathbb{I}_{2}\otimes\V{M}_{\mathrm{B}})^\dag\}\label{eqn:prob_1bitopt}
\end{align}
and the density matrix of a kept qubit pair is given by
\begin{align}
\tilde{\V{\rho}}&=\frac{(\mathbb{I}_{2}\otimes\V{M}_{\mathrm{B}})  \,\check{\V{\rho}}\,  (\mathbb{I}_{2}\otimes\V{M}_{\mathrm{B}})^\dag}
{\mathrm{tr}\big\{(\mathbb{I}_{2}\otimes\V{M}_{\mathrm{B}})  \,\check{\V{\rho}}\,  (\mathbb{I}_{2}\otimes\V{M}_{\mathrm{B}})^\dag\big\}}.
\label{eqn:dense_1bitopt}
\end{align}
Substituting \eqref{eqn:int_dense_II}--\eqref{eqn:gammadelta_pzeta} into \eqref{eqn:prob_1bitopt} and \eqref{eqn:dense_1bitopt},
one can obtain \eqref{eqn:dense_after1bit}--\eqref{eqn:deltahat_pzeta}. The details are omitted for brevity.
\end{proof}

From \eqref{eqn:dense_after1bit} and \eqref{eqn:nuhat}, after the \ac{RSSP}, the joint density matrix of two qubit pairs, where the first and last two qubits belong to Alice and Bob respectively, is given by
\begin{align*}
{\V{\rho}}_{\mathrm{J}}&=\V{P}\tilde{\V{\rho}}\otimes\tilde{\V{\rho}}\,\V{P}^\dag\\
&=\tilde{F}^2|\Omega^{(1)}\rangle\langle\Omega^{(1)}| +
\tilde{F}(1-\tilde{F})\big(|\Omega^{(2)}\rangle\langle\Omega^{(2)}|
+ |\Omega^{(3)}\rangle\langle\Omega^{(3)}|\big)\\
&\hspace{3.7mm}+ (1-\tilde{F})^2|\Omega^{(4)}\rangle\langle\Omega^{(4)}|
\end{align*}
where $\V{P}$ is the permutation operator that switches the second and third qubits, and
\begin{align}
\begin{array}{*{8}{r@{\;}}r@{}r}
|\Omega^{(1)}\rangle&=&&
\frac{1}{2}|0 0 0 0 \rangle&+&
\frac{1}{2}|0 1 0 1 \rangle
\vspace{1mm}\\
&&+&
\frac{1}{2}|1 0 1 0 \rangle&+&
\frac{1}{2}|1 1 1 1 \rangle
\vspace{1mm}\\
|\Omega^{(2)}\rangle&=&&
\frac{\tilde{\gamma}\sqrt{2}}{2}|0 0 0 1 \rangle&+&
\frac{\tilde{\delta}e^{\imath \theta}\sqrt{2}}{2}|0 1 0 0 \rangle
\vspace{1mm}\\
&&+&
\frac{\tilde{\gamma}\sqrt{2}}{2}|1 0 1 1 \rangle&+&
\frac{\tilde{\delta}e^{\imath \theta}\sqrt{2}}{2}|1 1 1 0 \rangle
\vspace{1mm}\\
|\Omega^{(3)}\rangle&=&&
\frac{\tilde{\gamma}\sqrt{2}}{2}|0 0 1 0 \rangle&+&
\frac{\tilde{\gamma}\sqrt{2}}{2}|0 1 1 1 \rangle
\vspace{1mm}\\
&&+&
\frac{\tilde{\delta}e^{\imath \theta}\sqrt{2}}{2}|1 0 0 0 \rangle&+&
\frac{\tilde{\delta}e^{\imath \theta}\sqrt{2}}{2}|1 1 0 1 \rangle
\vspace{1mm}\\
|\Omega^{(4)}\rangle&=&&
\tilde{\gamma}^2|0 0 1 1 \rangle&+&
\tilde{\gamma}\tilde{\delta}e^{\imath \theta}|0 1 1 0 \rangle
\vspace{1mm}\\
&&+&
\tilde{\gamma}\tilde{\delta}e^{\imath \theta}|1 0 0 1 \rangle&+&
\tilde{\delta}^2e^{\imath 2\theta}|1 1 0 0 \rangle&&.
\end{array}
\nonumber
\end{align}

For the first round of distillation, after both agents perform the CNOT operation, the joint density matrix of two qubit pairs becomes
\begin{align}
\check{\V{\rho}}_{\mathrm{J}}
&=\tilde{F}^2|\check\Omega^{(1)}\rangle\langle\check\Omega^{(1)}| +
\tilde{F}(1-\tilde{F})\big(|\check\Omega^{(2)}\rangle\langle\check\Omega^{(2)}|
+ |\check\Omega^{(3)}\rangle\langle\check\Omega^{(3)}|\big)\nonumber
\\&\hspace{3.7mm}
+ (1-\tilde{F})^2|\check\Omega^{(4)}\rangle\langle\check\Omega^{(4)}|\label{eqn:jointBCNOT}
\end{align}
where\begin{align}
\begin{array}{*{8}{r@{\;}}r@{}r}
|\check\Omega^{(1)}\rangle&=&&
\frac{1}{2}|0 0 0 0 \rangle&+&
\frac{1}{2}|0 1 0 1 \rangle
\vspace{1mm}\\
&&+&
\frac{1}{2}|1 1 1 1 \rangle&+&
\frac{1}{2}|1 0 1 0 \rangle\vspace{1mm}
\\
|\check\Omega^{(2)}\rangle&=&&
\frac{\tilde{\gamma}\sqrt{2}}{2}|0 0 0 1 \rangle&+&
\frac{\tilde{\delta}e^{\imath \theta}\sqrt{2}}{2}|0 1 0 0 \rangle
\vspace{1mm}\\
&&+&
\frac{\tilde{\gamma}\sqrt{2}}{2}|1 1 1 0 \rangle&+&
\frac{\tilde{\delta}e^{\imath \theta}\sqrt{2}}{2}|1 0 1 1 \rangle
\vspace{1mm}\\
|\check\Omega^{(3)}\rangle&&=&
\frac{\tilde{\gamma}\sqrt{2}}{2}|0 0 1 1 \rangle&+&
\frac{\tilde{\gamma}\sqrt{2}}{2}|0 1 1 0 \rangle
\vspace{1mm}\\
&&+&
\frac{\tilde{\delta}e^{\imath \theta}\sqrt{2}}{2}|1 1 0 0 \rangle&+&
\frac{\tilde{\delta}e^{\imath \theta}\sqrt{2}}{2}|1 0 0 1 \rangle\vspace{1mm}\\
|\check\Omega^{(4)}\rangle&&=&
\tilde{\gamma}^2|0 0 1 0 \rangle&+&
\tilde{\gamma}\tilde{\delta}e^{\imath \theta}|0 1 1 1 \rangle
\vspace{1mm}\\
&&+&
\tilde{\gamma}\tilde{\delta}e^{\imath \theta}|1 1 0 1 \rangle&+&
\tilde{\delta}^2e^{\imath 2\theta}|1 0 0 0 \rangle&.
\end{array}
\nonumber
\end{align}

From \eqref{eqn:jointBCNOT}, if both measurement results correspond to $|1\rangle\langle1|$, the (unnormalized) density matrix of the source qubit pair is given by
\begin{align}
\V{\rho}_{11}&=(\mathbb{I}_2\otimes\langle1|\otimes\mathbb{I}_2\otimes\langle1
|)\,\check{\V{\rho}}_{\mathrm{J}}\,(\mathbb{I}_2\otimes|1\rangle\otimes\mathbb{I}_2\otimes|1\rangle)\nonumber
\\&=\tilde{F}^2\frac{1}{2}|\Phi^+\rangle\langle\Phi^+|\nonumber
\\&\hspace{3.7mm}+ (1-\tilde{F})^2(\tilde{\gamma}\tilde{\delta})^2\big(|01\rangle+|10\rangle\big)\big(\langle01|+\langle10|\big).
\label{eqn:case11}
\end{align}
Otherwise, if both measurement results correspond to $|0\rangle\langle0|$, the (unnormalized) density matrix of the source qubit pair is given by
\begin{align}
\V{\rho}_{00}&=(\mathbb{I}_2\otimes\langle0|\otimes\mathbb{I}_2\otimes\langle0
|)\,\check{\V{\rho}}_{\mathrm{J}}\,(\mathbb{I}_2\otimes|0\rangle\otimes\mathbb{I}_2\otimes|0\rangle)\nonumber
\\\nonumber &=\tilde{F}^2\frac{1}{2}|\Phi^+\rangle\langle\Phi^+|+ (1-\tilde{F})^2\big(\tilde{\gamma}^2|01\rangle+\tilde{\delta}^2e^{\imath 2\theta}|10\rangle\big)
\\&\hspace{3.7mm}\big(\tilde{\gamma}^2\langle01|+\tilde{\delta}^2e^{-\imath 2\theta}\langle10|\big)
\label{eqn:case00}
\end{align}

From \eqref{eqn:case11}, and \eqref{eqn:case00}, if the agents adopt the \ac{FP} approach, i.e., keep the source qubit pair only if both measurement results correspond to $|1\rangle\langle1|$, the probability of keeping the source qubit pair is
\begin{align}
P_{\mathrm{f}}=\mathrm{tr}\{\V{\rho}_{11}\}=\frac{\tilde{F}^2}{2}+2(1-\tilde{F})^2(\tilde{\gamma}\tilde{\delta})^2\label{eqn:Pf}
\end{align}
the fidelity of the kept qubit pairs is
\begin{align}
F_1=\frac{\frac{1}{2}\tilde{F}^2}{P_f}=\frac{\tilde{F}^2}{\tilde{F}^2+4(1-\tilde{F})^2(\tilde{\gamma}\tilde{\delta})^2}\label{eqn:F1f}
\end{align}
and the density matrix of the kept qubit pair is
\begin{align}
\tilde{\V{\rho}}=\frac{\V{\rho}_{11}}{P_f}=F_1|\Phi^+\rangle\langle\Phi^+| +(1-F_1)|\Psi^+\rangle\langle\Psi^+|.\label{eqn:rho1f}
\end{align}

If the agents adopt the \ac{PP} approach, i.e.,  keeping the source qubit pair if the measurement results match, the probability of preserving the source qubit pair is
\begin{align}
P_{\mathrm{p}}=\mathrm{tr}\{\V{\rho}_{11}+\V{\rho}_{00}\}=\tilde{F}^2+(1-\tilde{F})^2\label{eqn:Pp}
\end{align}
the fidelity of the kept qubit pairs is
\begin{align}
F_1=\frac{\frac{1}{2}\tilde{F}^2+\frac{1}{2}\tilde{F}^2}{P_1}=\frac{\tilde{F}^2}{\tilde{F}^2+(1-\tilde{F})^2}\label{eqn:F1p}
\end{align}
and the density matrix of the kept qubit pair can be written as
\begin{align}
\tilde{\V{\rho}}&=\frac{\V{\rho}_{11}+\V{\rho}_{00}}{P_{\mathrm{p}}}\nonumber
\\&=F_1|\Phi^+\rangle\langle\Phi^+| +G|\Psi\rangle\langle\Psi| + \tilde{G}|\tilde{\Psi}\rangle\langle\tilde{\Psi}|\label{eqn:rho1p}
\end{align}
where $G+\tilde{G}=1-F_1$, $|\Psi\rangle, |\tilde{\Psi}\rangle\in\mathrm{span}(|01\rangle, |10\rangle)$, and $\langle\Psi|\tilde{\Psi}\rangle=0$.

From \thref{lem:single}, \eqref{eqn:Pf}, \eqref{eqn:F1f}, \eqref{eqn:Pp} and \eqref{eqn:F1p}, after the \ac{RSSP} and first round of distillation, a qubit pair is kept with probability
\begin{align*}
P_1=\left\{
\begin{array}{llr}
\dfrac{P_{\mathrm s}P_{\mathrm f}}{2} &=&\dfrac{F_0^2\alpha^2\beta^4 + (1-F_0)^2\beta^2\gamma^2\delta^2}{2F_0\alpha^2\beta^2 + (1-F_0)(\alpha^2\gamma^2 +\beta^2\delta^2)}\vspace{1mm}\\
&&\mbox{for the \ac{FP} approach}\vspace{2mm}\\
\dfrac{P_{\mathrm s}P_{\mathrm p}}{2}&=&\dfrac{4F_0^2\alpha^4\beta^4 + (1-F_0)^2(\alpha^2\gamma^2 +\beta^2\delta^2)^2}{4F_0\alpha^4\beta^2 + 2(1-F_0)\alpha^2(\alpha^2\gamma^2 +\beta^2\delta^2)}\vspace{1mm}\\
&&
\mbox{for the \ac{PP} approach}
\end{array}
\right.
\end{align*}
and fidelity
\begin{align*}
F_1=\left\{\begin{array}{l@{\;}l}
\dfrac{F_0^2}{F_0^2+(1-F_0)^2(\frac{\gamma\delta}{\alpha\beta})^2}
&
\mbox{for the \ac{FP} approach}\vspace{1.5mm}\\
\dfrac{F_0^2}{F_0^2+\frac{1}{4}(1-F_0)^2(\frac{\gamma^2}{\beta^2}+\frac{\delta^2}{\alpha^2})^2}
&
\mbox{for the \ac{PP} approach}
\end{array}\right.
\end{align*}

For the following rounds of distillations, one can take \eqref{eqn:rho1f} or \eqref{eqn:rho1p} as input, use similar analysis as in \eqref{eqn:jointBCNOT}--\eqref{eqn:case00} and \eqref{eqn:Pp}--\eqref{eqn:rho1p}. This analysis will show that
\begin{align*}
P_{k}&=\frac{1}{2}\big(F_{k-1}^2+(1-F_{k-1})^2\big),\\
F_k&=\frac{F_{k-1}^2}{F_{k-1}^2+(1-F_{k-1})^2}\end{align*}
and the density matrix of the kept qubit pairs maintains the same structure as in  \eqref{eqn:rho1f} or \eqref{eqn:rho1p}.
This competes the proof.

\end{document}